\let\csname equation*\endcsname\relax
\let\csname endequation*\endcsname\relax
\DeclareMathAlphabet{\mathdj}{U}{msb}{m}{n}  
\newcommand{\Reals}{\ensuremath{\mathdj {R}}} % R , the reals
\newbox\squ  % box character for ends of proofs
\def\sqbox{\copy\squ\hskip -.4pt}
\newenvironment{proof}{\topsep=\smallskipamount \partopsep=0pt  %
 \begin{trivlist} \itemindent=\parindent                        %
 \item[\hskip \labelsep\emph{Proof:}]}{\qed\end{trivlist}}      %
\let\qed=\relax                                                 %
\def\qed                                                        %
   \quad\hbox{}\nobreak\hfil $\sqbox$                             %
\newtheorem{theorem}{Theorem}[section]
\newtheorem{corollary}{Corollary}[theorem]
\newtheorem{lemma}[theorem]{Lemma}
\newtheorem{prop}{Proposition}
\newcommand{\INF}{\mathrm{INF}}
\newcommand{\Collins}{\mathrm{C}}
\newcommand{\LK}{\mathrm{LK}}
\newcommand{\RT}{\mathrm{RT}}
\begin{document}

\title[$G_2$ Cosmologies II:  One Hypersurface-Orthogonal Killing Vector Field]{Dynamical Equilibrium States of a Class of Irrotational Non-Orthogonally Transitive $G_{2}$ Cosmologies II: Models With One Hypersurface-Orthogonal Killing Vector Field}

\author{ Sepehr Rashidi$^1$, C.G. Hewitt$^{2,1}$, Benoit Charbonneau$^{3,1}$}

\address{$^1$ University of Waterloo, Department of Physics And Astronomy, Waterloo, Ontario, Canada}
\address{$^2$ University of Waterloo, C.E.M.C., Faculty of Mathematics, Waterloo, Ontario, Canada}
\address{$^3$ University of Waterloo, Department of Pure Mathematics, Waterloo, Ontario, Canada}
\ead{srashidi@edu.uwaterloo.ca$^{1}$}
\ead{cghewitt@uwaterloo.ca$^{2}$}
\ead{benoit@alum.mit.edu$^{3}$}
\vspace{10pt}
\begin{indented}
\item[]March 2021
\end{indented}

\begin{abstract}
We consider a class of inhomogeneous self-similar cosmological models in which the perfect fluid flow is tangential to the orbits of a three-parameter similarity group. We restrict the similarity group to possess both an Abelian $G_{2}$, and a single hypersurface orthogonal Killing vector field, and we restrict the fluid flow to be orthogonal to the orbits of the Abelian $G_{2}$. The temporal evolution of the models is forced to be power law, due to the similarity group, and the Einstein field equations reduce to a three-dimensional autonomous system of ordinary differential equations which is qualitatively analysed in order to determine the spatial structure of the models. The existence of two classes of well-behaved models is demonstrated. The first of these is asymptotically spatially homogeneous and matter dominated, and the second is vacuum dominated and either asymptotically spatially homogeneous or acceleration dominated, at large spatial distances. 
\end{abstract}

%
% Uncomment for keywords
%\vspace{2pc}
%\noindent{\it Keywords}: XXXXXX, YYYYYYYY, ZZZZZZZZZ
%
% Uncomment for Submitted to journal title message
%\submitto{\JPA}
%
% Uncomment if a separate title page is required
%\maketitle
% 
% For two-column output uncomment the next line and choose [10pt] rather than [12pt] in the \documentclass declaration
%\ioptwocol

\section{Introduction:}

In the companion paper to this, \cite{hewitt2020}, we have defined the N-OT $G_{2}$ cosmologies to be those exact solutions of the Einstein Field Equations (EFEs) with an expanding perfect fluid source and linear equation of state, whose symmetry group is an Abelian two-dimensional isometry group which does not necessarily act orthogonally transitively. It is assumed that the reader is familiar with \cite{ellis1Maccullum969class}, or similar works \cite{wainwright2005dynamicalSystems}, where the variables and formalism that we employ and exploit have been established. The goal of this paper is to perform a complete analysis of the dynamical equilibrium states of the subclass of the N-OT $G_{2}$ cosmologies in which the perfect fluid source is irrotational and in which one of the Killing vector fields (KVFs) is hypersurface orthogonal (HO). That is, we examine a class of self-similar solutions of the EFEs in which: there is a three parameter homothety group \cite{eardley1974commun} $H_{3}$ acting transitively on timelike hypersurfaces; the $H_{3}$ has a two dimensional Abelian isometry subgroup $G_{2}$ and one of the KVFs is HO; there is an expanding irrotational perfect fluid source, with linear equation of state flowing orthogonal to the orbits of the $G_{2}$ and tangential to the orbits of the $H_{3}$. 
We begin, in Section 2, with an overview of the previous works that have examined N-OT
$G_{2}$ models admitting one HO KVF. These works were mainly focused on obtaining separable line-elements. We provide a complete classification of the N-OT $G_{2}$ cosmologies with separable line-element admitting one HO KVF, and compare it to the classification of the closely-related diagonal $G_{2}$ cosmologies with separable line-element, in which there are two 
HO KVFs. The ordinary differential equation for the models under consideration here is simply obtained in Section 3 by imposing the appropriate restrictions on the systems developed in \cite{hewitt2020}.
New variables are introduced in order to compactify the phase space, and thus make the qualitative analysis simpler. The invariant sets, equilibrium points and their local stability are also provided.\\
In Section 4 we present a complete analysis of the vacuum models, which arise on the boundary of our phase space. This completes the analysis of all the self-similar models which arose in \cite{van1988class}.
We show that the generic models typically follow one of the following two types of behaviour: they are either well-behaved and asymptotic to a plane wave model at large spatial distance, or they are badly behaved at large spatial distance with some of the components of the Electric and Magnetic parts of the Weyl tensor diverging. The exception is when the (equation of state) parameter is $\frac{4}{3}$, in which almost all of the vacuum models are asymptotically flat at large spatial distance.
In Section 5, we consider perfect fluid models with a linear equation of state relating pressure $p$ to energy density $\mu$ through $p=(\gamma-1)\mu$. We present a complete analysis for a variety of values of the equation of state parameter $\gamma$, namely $1<\gamma<\frac{3}{2}$, $\gamma\neq \frac{10}{9}$. We reveal an interesting monotone function which is crucial in limiting the possible asymptotic behaviours of the models, at large spatial distance. Models which are well-behaved exist for the two ranges of $\gamma$ of, $(1,\frac{10}{9})$ and $(\frac{6}{5},\frac{3}{2})$, and fall into three distinct classes, depending on the value of the equation of state parameter. These classes are, asymptotically spatial homogeneous and matter dominated, asymptotically spatial homogeneous and vacuum dominated ($\frac{10}{7}$ only), and, acceleration dominated and vacuum dominated. As in the analogous exceptional orthogonal Bianchi VI($h=-\frac{1}{9}$) spatially homogeneous class, the value of the equation of state parameter being equal to $\gamma=\frac{10}{9}$ causes a dramatic bifurcation of the entire phase space, and this case merits its own section, namely Section 6. A Stokes--Bendixson--Dulac function is presented, whose existence eliminates the possibility of limit cycles arising in the phase space. For this value of $\gamma$, phase space is foliated by a one-parameter family of invariant two spaces, and we show that there are open sets of models which are asymptotic to the Wainwright Bianchi VI($h=-\frac{1}{9}$) model at large spatial distances. The appendix contains the details about the asymptotic analysis that is summarized in Sections 4 and 5.

\section{Classification Of $G_{2}$ Cosmologies With Separable Line-Element}

The main goal of this paper is to examine the asymptotic states of the N-OT $G_{2}$ cosmologies in which one of the KVFs is HO. We may complete this task without dwelling on the exact form of any associated line-element, and that has always been one of the advantages of the technique of using dimensionless variables associated with a group invariant orthonormal frame. However, to make connections to previous works, we perform an analysis of the line-element in this section. We are aware of three papers that have looked at the exact solutions of the EFEs which admit an Abelian $G_{2}$ which has one HO KVF. Each of the papers commenced with the line-element
\begin{eqnarray}
	ds^{2} = -e^{2k} dt^{2} + e^{2h} dx^{2} + r[f \ dy^{2}+f^{-1}(dz+ \ wdx)^{2}], \label{metricAnztaz}
\end{eqnarray} 
and assumed that the functions appearing in the line-element are separable functions of the dependent variables $t$ and $x$. In each case, the main goal was to obtain exact solutions.
In \cite{van1988class}, only vacuum models are investigated and various systems of ODEs are derived but not analysed. We complete that investigation in Section 5. In \cite{wils1991inhomogeneous}, perfect fluid models are looked at, with the primary goal being to uncover exact solutions. Many of the exact solutions that are obtained are transitively self-similar, and are in the paper of \cite{hsu1986self}, and most of the rest either have singularities or they are, either vacuum, or stiff. \cite{van1992qualitative} analysed these stiff models in more detail using qualitative analysis on an unbounded phase space.\\

Let us first consider the problem of trying to find exact solutions of the \emph{diagonal $G_{2}$} cosmologies with separable line-element, that is, what happens if we set the function $w$ to zero in (\ref{metricAnztaz}) and impose separability on the line-element. There are three main classes of (non-stiff, non-vacuum) well-behaved solutions and these are
\begin{enumerate}
	\item spatially homogeneous,\\
	\item self-similar, \\
	\item models with no additional symmetry.
\end{enumerate}

The line elements in the first two cases are well-known and appear in \cite{kramer1980exact}. \\
The line elements in class (iii) fall naturally into 5 mutually exclusive classes( see 
\cite{hewitt1994invariant} ). These are divided into the Wainwright Goode class and the Unit Shear class as shown in Table \ref{TableofClassIIIModelsWithNoAdditionalSymmetry}.
\begin{table}[h!]
	\caption{Models With No Additional Symmetry}\label{TableofClassIIIModelsWithNoAdditionalSymmetry}
	\lineup
	\begin{center}
	\begin{tabular}{@{}*{3}{l}}
		\br                              
	    Restriction & Class Name   \\ 
		\hline
	    $N_{X}=0$ & Wainwright Goode Class \\
		\hline
	    $\Sigma = 1 \ \gamma = \frac{4}{3}$ & Unit Shear Class  \\
		\hline
	\end{tabular}	
	\end{center}
\end{table}

Models in the Unit Shear class are distinguished by the magnitude of the deceleration parameter $q$ as illustrated in Table \ref{TableofUnitShearClass}.

\begin{table}[h!]
	\caption{Unit Shear Class.}\label{TableofUnitShearClass}
	\lineup
	\begin{center}
	\begin{tabular}{@{}*{5}{l}}
		\br                              
		Class & $q$ & Name & Features \\ 
		\hline
		I & $q < 1$ &  \begin{tabular}{c} Generalized Senovilla \cite{senovilla1990new} \end{tabular} & \begin{tabular}{c} Cylindrically \\ Symmetric \\ No big-bang \end{tabular}
		\\
		\hline
		II & $1<q<2$ &\begin{tabular}{c} Generalized Feinstein--Senovilla \cite{feinstein1989new} \end{tabular} & \begin{tabular}{c} Cylindrically \\ Symmetric \end{tabular}  \\
		\hline
		III & $q=2$ & \begin{tabular}{c} Davidson \cite{davidson1991big} \end{tabular} & \begin{tabular}{c} Cylindrically \\ Symmetric \end{tabular}\\
		\hline 
		IV & $2 < q$ &\begin{tabular}{c} Van den Bergh and Skea \cite{van1992inhomogeneous}, \\ Inhomogeneous generalizations\\
			 of the time-symmetric\\ Kanowski--Sachs model \end{tabular} & \begin{tabular}{c} Spatially\\ Inhomogeneous \\ and periodic\\ recollapse \end{tabular} \\
		\hline
	\end{tabular}	
\end{center}
\end{table}
The models are all asymptotically self-similar at both early and late times except the Davidson model which is asymptotically self-similar at early times only.\\

We now repeat this analysis for the N-OT $G_{2}$ cosmologies with separable line-element admitting one HO KVF. That is, we consider the line-elements (\ref{metricAnztaz}) in the case when $w$ is not zero. Once again there are three classes of model and these are
\begin{enumerate}
	\item spatially homogeneous ( see \cite{davidson1991big} ),
	\item self-similar. These are the very models which we study in this paper,
	\item models with no additional symmetry.
\end{enumerate}
The line-element for case (i) can be written 
\begin{eqnarray}
\fl ds^2 =& -dt^2 + e^{2K(t)}dx^2+ e^{2(K(t) + S(t) + F(t))}e^{8 x} dy^2\\
\fl \phantom{=}&+ e^{2(K(t) + S(t) - F(t))} e^{4 x}(dz + w(t) e^{-2x} dx )^2. \nonumber
\end{eqnarray}
The line element for case (ii) can be written
\begin{eqnarray}
\fl ds^2 = e^{2K(x)} e^{2t}(-dt^2+dx^2) + e^{2(K(t) + S(t) + F(t))} e^{\frac{2}{\gamma}(4-3\gamma)t} dy^{2}  \\
\fl \phantom{ ds^2=} + e^{2(K(t) + S(t) - F(t))} e^{ \frac{4}{\gamma}(\gamma-1)t}(dz + w(x) e^{ \frac{2-\gamma}{\gamma}t } dx )^2. \nonumber
\end{eqnarray}
A careful examination of the third class reveals that their equation of state parameter is restricted by $-7 \gamma^{2} + 16 \gamma -8 = 0$, $\gamma = \frac{8 \pm \sqrt{8} }{7}$. Note that this is the same restriction that appears in \cite[equation 3.18]{wils1991inhomogeneous}. In addition, these models have
\begin{eqnarray}
	\Sigma_{+} = - \frac{3\gamma-2}{4}, \Sigma_{-} = \frac{6-5\gamma}{4}, \mbox{and} \quad \Sigma = 1+ \Sigma_{13}^{2} \geq 1.
\end{eqnarray}
Detailed analysis of this third class shows that it naturally partitions into three subclasses according to whether the deceleration parameter is greater than, equal to, or less than 2. The models in each of these subclasses are badly-behaved in that the acceleration of the fluid diverges on each spacelike hypersurface, $t=constant$. Unfortunately, there are no generalizations of the generalized Senovilla model, generalized Feinstein--Senovilla, Davidson nor the Van den Bergh--Skea models in this class. In addition the simplest well-behaved models which have not been fully examined are the self-similar ones which we discuss here.

\section{ODEs and General Properties}

The EFEs for the models under consideration here are obtained from equation (1) in \cite[equation 1]{hewitt2020} by setting $\Sigma_{23}=0$ and $N_{22}=0$. There are two curvature variables, $A$, $N_{x}$, and a non-constant shear variable $\Sigma_{13}$.

We then have the following 1-parameter, three dimensional, system of ODEs in dimensionless form:
\begin{eqnarray}
\boldsymbol \partial_{1} A = 2 \bigl[ A^{2} - (\frac{3}{4})^{2}(2-\gamma)^{2} \bigl] + \frac{3}{2}(2-\gamma) \Omega + 9 (\Sigma_{13})^{2}, \label{G2H3-ODE-A-N-SIGMA13-1}\\
\boldsymbol \partial_{1}(N_{\times}) = 2 A N_{\times}  - \frac{3}{8}(2-\gamma)(6-5 \gamma) - 3 \Sigma_{13}^{2}, \label{G2H3-ODE-A-N-SIGMA13-2}\\
\boldsymbol \partial_{1} (\Sigma_{13}) =  3\bigl( A + \dot{U} - N_{\times} \bigl) \Sigma_{13}, \label{G2H3-ODE-A-N-SIGMA13-3}
\end{eqnarray}
where the acceleration variable $\dot{U}$ is given by
\begin{eqnarray}
\dot{U} = \frac{1}{3 (2-\gamma)} \bigl[(3\gamma-2)A  - 3(6-5\gamma) N_{\times} \bigl],
\end{eqnarray}
and the dimensionless energy density $\Omega$ is given by
\begin{eqnarray}
\fl 3(\gamma - 1)\Omega &= 9(3-2\gamma)(\gamma - 1)+ \frac{(6-7 \gamma)}{(2-\gamma)} A^{2}  - 9N_{\times}^{2}
+ \frac{6(6-5\gamma)}{(2-\gamma)} A N_{\times} +  9 \Sigma_{13}^{2}.  \label{G2H3-ODE-A-N-SIGMA13-4} 
\end{eqnarray}
The auxiliary equation is
\begin{eqnarray}
\boldsymbol \partial_{1}(\Omega) = 3\Omega  \dot{U} \Bigl(  \frac{2 - \gamma}{1 - \gamma}  \Bigl). 
\end{eqnarray}
The models which have $\Sigma_{13}=0$ are self-similar diagonal $G_{2}$ cosmologies and have been examined in the papers of \cite{hewitt1988qualitativeI} and \cite{hewitt1991qualitativeII}. There are two parameters in these works, the equation of state parameter $\gamma$, which was restricted by $1 < \gamma < 2$, and a shear parameter, labelled $r$, which was restricted to be non-negative due to the symmetry in the models. The two papers \cite{hewitt1988qualitativeI} and \cite{hewitt1991qualitativeII} were distinguished by the range of the variable $r$, with the models in \cite{hewitt1988qualitativeI} satisfying $0 \leq r \leq r^*$, where $(r^*)^{2} = \frac{7\gamma - 6}{3 \gamma - 2}$. The models in \cite{hewitt1991qualitativeII} had $r^* < r$, and it was shown that such models were well-behaved iff $r=r^{**}$, where $(r^{**})^{2}=\frac{3\gamma-2}{2-\gamma}$.\\

It follows from equation (A.68) in \cite{hewitt2020} and \cite{hewitt1997dynamical} that the models considered in this work have $r^{2}=\frac{6-5\gamma}{(2-\gamma)(3\gamma-2)}$. If we restrict $\gamma$ by $1<\gamma<2$ then we obtain generalizations of the models in \cite{hewitt1988qualitativeI} iff $(2\gamma-3) \leq 0 $, and we do not obtain any generalizations of the models in \cite{hewitt1991qualitativeII}. Thus, we restrict the equation of state parameter $\gamma$ by $1 < \gamma < \frac{3}{2}$ in this paper.\\

An examination of the equation for the energy density, (\ref{G2H3-ODE-A-N-SIGMA13-4}), reveals that the vacuum models lie on a hyperboloid of one sheet and then the phase space is unbounded. We introduce the variables $D, S$ and $\chi$ defined by
\begin{eqnarray}
D = 1+ \frac{\Sigma_{13}^{2}}{(\gamma-1)(3-2\gamma)}, \
S = D^{-1/2}, \ \ \frac{d}{d\chi} = \frac{8(\gamma-1)}{3}S  \ \boldsymbol \partial_{1}. \label{S-D}
\end{eqnarray}
Note that $0 \leq S \leq 1$ and $\Sigma_{13} \longrightarrow \infty \iff S \longrightarrow 0$.\\
Also, we scale the variables $A, N_{x}, \Sigma_{13}$ and $\Omega$, to obtain the new variables $Y_{1}, Y_{2}, Y_{3}$, $Y_{4}$:
\begin{eqnarray}
\fl Y_{1} = \frac{4}{3(2-\gamma)} S A, \ Y_{2} = 4 S \bigl( \frac{ (6-5\gamma) A - 3(2-\gamma) N_{x} }{3(2-\gamma)}  \bigl), \ Y_{3} = \frac{4S \Sigma_{13} }{\sqrt{2-\gamma}}, \ Y_{4} = \Omega S^{2}. \label{CoordinateTransTUV-to-Y1Y2Y3Y4}
\end{eqnarray}
Note that if $Y_{4}=0$, then it does not imply that $S = 0$. In terms of these new variables and the independent variable $\chi$, the EFEs can be written as:
\begin{eqnarray}
\fl \frac{d Y_{1}}{d \chi} = \frac{4(\gamma - 1)}{(2 - \gamma)} (1-S^{2}) [ 4 \gamma ( 3 - 2 \gamma) - Y_{1} L ] + 4(10 - 7 \gamma)(\gamma - 1) \ ( S^{2} - Y^{2}_{1} ) - Y^{2}_{2}, \label{ODE-Y1Y2D2-1}\\
\fl \frac{ d Y_{2}}{d \chi} = \frac{4( \gamma - 1) }{(2-\gamma)}(1-S^{2}) \  [ 4(3 - 2 \gamma)(12 \gamma - 4 - 7 \gamma^{2}) - Y_{2} L] + 4(\gamma-1)(2 - \gamma) Y_{1} Y_{2} \nonumber \\
\fl  \phantom{\frac{ d Y_{2}}{d \chi} =}  + (6-5\gamma) \ [16(\gamma-1)(3-2\gamma)(S^{2} - Y^{2}_{1}) - Y^{2}_{2}], \label{ODE-Y1Y2D2-2} \\
\fl \frac{d S}{d \chi} = -\frac{ 4(\gamma - 1)}{(2-\gamma)}[S(1 - S^{2})L].  \label{ODE-Y1Y2D2-3}
\end{eqnarray}
where $L=(-20 + 36 \gamma -15 \gamma^{2})Y_{1} + (4-3 \gamma) Y_{2}$ is a scaled acceleration term. \\
For more details on these calculations please see \cite{rashidi2019subclass}.\\
Note that the differential equation for the shear variable has been replaced by the differential equation for $S$. The scaled energy density, $Y_{4}$, is given by
\begin{eqnarray}
\fl \frac{16(\gamma-1)}{3} Y_{4} =  16(3-2 \gamma)(\gamma - 1) (1-Y_{1}^{2}) - Y^{2}_{2}, \label{ODE-Y1Y2D2-Y4-Equation}
\end{eqnarray}
and the new auxiliary equation is
\begin{eqnarray}
\fl \frac{d Y_{4}}{d \chi} = 2 \Bigl[ (5\gamma - 6) Y_{2} + 4( \gamma - 1)(7 \gamma - 10) Y_{1} - \frac{ 4(\gamma - 1)}{(2- \gamma)}(1-S^{2}) L \Big]Y_{4}.\label{ODE-Y1Y2D2-dY4dchiEquation}
\end{eqnarray}
The system has a reflection symmetry through the origin, given by
\begin{eqnarray}
(\chi,Y_{1},Y_{2},S)   \mapsto (-\chi,-Y_{1},-Y_{2},S).
\end{eqnarray}
The physical region of phase space is the boundary and interior of the truncated elliptical cylinder defined by
\begin{eqnarray}
\fl \frac{16(\gamma-1)}{3} Y_{4} =  16(3-2 \gamma)(\gamma - 1) (1-Y_{1}^{2}) - Y^{2}_{2} \geq 0, \quad 0 \leq S \leq 1, \label{EllipticalCylinder}
\end{eqnarray}
and is illustrated in Fig \ref{FigureCompactified}.
\begin{figure}
	\caption{Compactified Phase Space}
	\centering
	\includegraphics[width=0.4\textwidth]{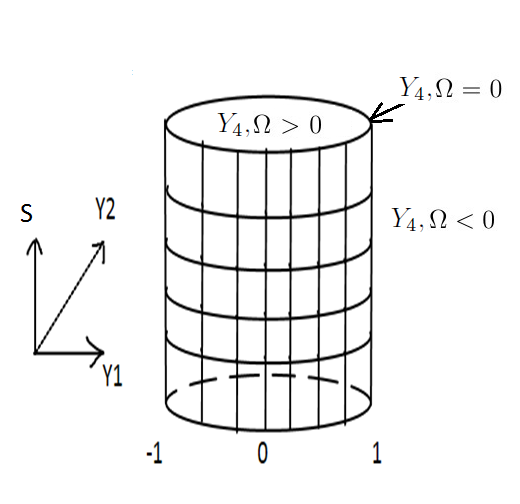} \label{FigureCompactified}
\end{figure}

\subsection{Invariant Sets}

The system possesses three invariant sets:
\begin{enumerate}
	\item $S = 1$. Models in this class admit two HO KVFs, and were considered in \cite{hewitt1988qualitativeI}.\\
	\item $Y_{4}=0$. These are the vacuum models, a differential equation for these models was written down in \cite{van1988class} but was not fully analyzed. We provide a complete analysis of these models here.\\
	\item $S=0$. This invariant set corresponds to unphysical models. Points at infinity in the original variables are now located on this set.
\end{enumerate}
\subsection{Equilibrium Points}
Before we provide the information about the equilibrium points and the eigenvalues of the linearization matrix at each equilibrium point, it is convenient to introduce the following quadratic expressions in $\gamma$. Let
\begin{eqnarray}
	Q_{1}(\gamma) :=  -63 \gamma^{2} + 156 \gamma - 92, \quad \quad \
	Q_{2}(\gamma) := -15 \gamma^{2} + 36\gamma - 20, \nonumber \\
	Q_{3}(\gamma) := -193\,{\gamma}^{2}+412\,\gamma-196, \quad Q_{4}(\gamma) := -171 \gamma^{2} + 408 \gamma - 224. \nonumber
\end{eqnarray}
$Q_{1}(\gamma)$, $Q_{2}(\gamma)$, and $Q_{4}(\gamma)$ are positive for $1 < \gamma < \frac{3}{2}$. $Q_{3}(\gamma)$ is positive for $1 < \gamma < \gamma_{1}$, and negative for $\gamma_{1} < \gamma < \frac{3}{2}$, where $\gamma_{1} = \frac{206}{193} + \frac{48 \sqrt{2}}{193} \approx 1.42$.\\
\begin{table}[H]
	\caption{Equilibrium Points Of The System}\label{TableofEquilibriumPointsGeneral}
	\lineup
	\begin{tabular}{@{}*{5}{l}}
		\br                              
		\small{Equilibrium Point (label)}& \small{$S^{2}$}& \small{$Y^{2}_{1}$}& \small{$Y_{2}$}& \small{$Y_{4}$}   \\ 
		\hline
		\small{Unphysical ($\INF$)}& \small{$0$} & \small{$\frac{4(3 - 2 \gamma )(\gamma - 1)}{(2-\gamma )^{2}}$}& \small{$2(4 - 3\gamma)Y_{1}$}& \small{$0$} \\
		\hline
		\small{Collins$\mathrm{VI}_h$ ($\mathrm{C}$)} & \small{$1$}& \tiny{$\frac{(6 - 5 \gamma)^{2}}{(3 \gamma - 2)(2 - \gamma)}$} & \small{$-4\frac{(10-7\gamma)(\gamma - 1)}{(6-5 \gamma)} Y_{1}$} & \small{$\frac{3(10 - 7 \gamma)(\gamma - 1)}{(3 \gamma - 2)}$}  \\
		\hline
		\small{Plane Waves ($\mathrm{LK}$)} & \small{1} & \small{1} & \small{0} & \small{0} 	\\
		\hline 
		\small{Robinson--Trautman ($\mathrm{RT}$)} & \tiny{$\frac{24(\gamma - 1)(3 - 2 \gamma)}{Q_{1}(\gamma)}$} & \tiny{$\frac{16(\gamma - 1)(4-3\gamma)^{2}(3 - 2\gamma)}{Q_{1}(\gamma)(2 - \gamma)^{2}}$} & \small{$-\frac{Q_{2}(\gamma)}{(4-3\gamma)}Y_{1}$} & \small{$0$} \\ 
		\hline
		\small{Wainwright,$\gamma =\frac{10}{9}$ (W)} & \small{$\frac{7}{3} - 8 Y_{1}^{2}$} & \small{$\frac{1}{6} \leq Y^{2}_{1} \leq \frac{7}{32}$} & \small{$- \frac{20}{9}Y_{1}$} & \small{$\frac{7}{3} - \frac{32}{3}Y^{2}_{1}$} \\
		\hline
	\end{tabular}	
\end{table}
Table \ref{TableofEquilibriumPointsGeneral} provides the information about the equilibrium points of the system. The equilibrium points are transitively self-similar cosmological models; they admit a $H_{4}$, which possess a $H_{3}$ subgroup. The $H_{3}$ has an abelian $G_{2}$ subgroup and the fluid is tangential to the $H_{3}$ orbits.\\

The list of the signs of the eigenvalues of the linearization matrix for each equilibrium point is shown in Table \ref{TableofSignofTheEigenvalues}. \\
\begin{table}[H]
	\caption{Signs of the real parts of the eigenvalues } \label{TableofSignofTheEigenvalues}
	\begin{tabular}{@{}*{5}{l}}
		\br                              
		Equilibrium Point &  $1 < \gamma < \frac{10}{9}$ & $\frac{10}{9} < \gamma < \frac{6}{5}$ & $\frac{6}{5}< \gamma < \frac{10}{7}$ & $\frac{10}{7} < \gamma < \frac{3}{2}$ \\ 
		\hline
		$\mathrm{INF}^{\pm}$& \multicolumn{4}{c}{ $\mp \mp \mp$ } \\ 
		\hline
		$\Collins^{\pm}$ & %$-,$ Real Parts% 
		$\pm \pm \pm$ & %$+,$ %Real Parts %
		$\mp \pm \pm$ & %$+,$ %Real Parts % 
		$\mp \mp \mp$ &	$Y_{4}<0$ (Unphysical)\\ 
		\hline
		$\mathrm{LK}^{\pm}$ & \multicolumn{3}{c}{ $\mp \pm \pm$ }  & $ \pm \pm \pm$ \\ 
		\hline 
		$\mathrm{RT}^{\pm}$ &$\pm \pm \mp$ & \multicolumn{3}{c}{ $\mp \pm \mp$ }  \\
		\hline
	\end{tabular}	
\end{table}
The eigenvalues of the linearization matrix at each equilibrium point are provided in Table \ref{TableofEigenvaluesPlusMinus}.
\begin{table}[H]
	\caption{Eigenvalues of the linearization matrix when Sign($Y_{1}$) $= \pm 1$.}\label{TableofEigenvaluesPlusMinus}
	\begin{center}
		\lineup
		\begin{tabular}{@{}*{5}{l}}
			\br                              
			Equilibrium Point &  The eigenvalues of the linearization matrix   \\ 
			\hline
			$\mathrm{INF}^{\pm}$	
			& \rule{0pt}{4ex}  $\lambda_{1}= \mp 24\, ( \gamma-1 ) ^{ \frac{3}{2}}\sqrt {3 - 2\,\gamma}$ \\
			& \rule{0pt}{4ex} $\lambda_{2} = \mp (16)(\gamma-1)^{ \frac{3}{2} }\sqrt{3-2\gamma}$ \\ 
			& \rule{0pt}{4ex} $\lambda_{3}= \mp {8(5\gamma-4)\sqrt{(\gamma-1)(3-2\gamma)}}$ \\
			\hline
			$\Collins^{\pm}$ & \rule{0pt}{4ex} $ \lambda_{1}=  \pm {\frac {8\,\sqrt {2-\gamma} ( \gamma-1 )  ( 10-9\gamma ) }
				{\sqrt {3\,\gamma-2}}}
			$\\
			&  \rule{0pt}{6ex}  $\lambda_{2} = \pm 2\,{\frac {\sqrt { \left(2 - \gamma \right) }
					\left( \gamma-1 \right)  \left( 6-5\,\gamma+\sqrt {-Q_{3}(\gamma)}
					\right) }{ \sqrt{3\,\gamma-2} }   }$ \\
			& \rule{0pt}{6ex}  $\lambda_{3}= \pm 2\,{\frac {\sqrt {  \left(2 - \gamma \right) }
					\left( \gamma-1 \right)  \left( 6 - 5\,\gamma - \sqrt{-Q_{3}(\gamma)}
					\right) }{\sqrt{3\,\gamma-2}}    }$  \\
			\hline
			$\LK^{\pm}$
			& \rule{0pt}{4ex}  $\lambda_{1}= \pm 8\, ( \gamma-1 )  ( 7\,\gamma-10 ) $ \\ 
			& \rule{0pt}{5ex}  $ \lambda_{2} = \pm 4\, ( \gamma-1 )  (2 - \gamma ) $\\
			& \rule{0pt}{4ex}  $\lambda_{3}= \pm 8\,\frac{ \left( \gamma-1 \right)  Q_{2} (\gamma) }{(2 - \gamma)} $	\\
			\hline 
			$\RT^{\pm}$
			& \rule{0pt}{4ex}  $\lambda_{1}= \pm \frac{8 (2-\gamma) \sqrt{\gamma - 1} \sqrt{3 - 2\gamma}(10-9\gamma)}{ \sqrt{Q_{1}(\gamma) }} $ \\
			& \rule{0pt}{8ex}  $\lambda_{2}= \pm \frac{8 (\gamma - 1)^{\frac{3}{2}} \sqrt{3 - 2\gamma} \Bigl(4 - 3 \gamma  -\sqrt{ Q_{4}(\gamma) } \Bigl) }{  \sqrt{ Q_{1}(\gamma)  }  } $\\
			& \rule{0pt}{8ex}  $\lambda_{3}= \pm \frac{8 (\gamma - 1)^{\frac{3}{2}} \sqrt{3 - 2\gamma} \Bigl( 4 - 3 \gamma + \sqrt{ Q_{4}(\gamma) } \Bigl) }{  \sqrt{ Q_{1}(\gamma)  }  } $  \\ 
			\hline
			Wainwright $( \gamma = \frac{10}{9} )$
			& \rule{0pt}{4ex}  $\lambda_{1} = 0$ \\
			& \rule{0pt}{4ex}  $ \lambda_{2} = {\frac {16}{81}}\,{\it  Y_{1}} \mp {\frac {16}{81}}\,\sqrt { 533 Y^{2}_{1} - 112 }
			$ \\
			& \rule{0pt}{5ex}  $\lambda_{3} =  {\frac {16}{81}}\,{\it  Y_{1}} \pm {\frac {16}{81}}\,\sqrt { 533 Y^{2}_{1} - 112 }
			$ \\
			\hline
		\end{tabular}
	\end{center}
\end{table}

\section{Analysis Of The Vacuum Boundary}

The vacuum boundary is the surface of the elliptical cylinder 
\begin{eqnarray}
Y_{4} = 0 =  16(3-2 \gamma)(\gamma - 1) (1-Y_{1}^{2}) - Y^{2}_{2}, 
\end{eqnarray}
bounded by $S=0$ and $S = 1$, as illustrated in Fig \ref{FigureUnfoldingVacuumBoundary}.\\
In order to display the solution curves, we make a cut at $Y_{1} = -1$ and unwrap the surface as illustrated, again, in Fig \ref{FigureUnfoldingVacuumBoundary}. Note that we identify the two sides of the resulting rectangle when $Y_{1} = -1$.\\
\begin{figure}[h!]
	\caption{Unfolding the Vacuum Boundary ($1 = Y^{2}_{1} + \frac{Y^{2}_{2}}{ 16(\gamma - 1)(3-2 \gamma)}$, on $S$ vs. $Y_{1}$) }
	\centering
	\includegraphics[width=0.7\textwidth]{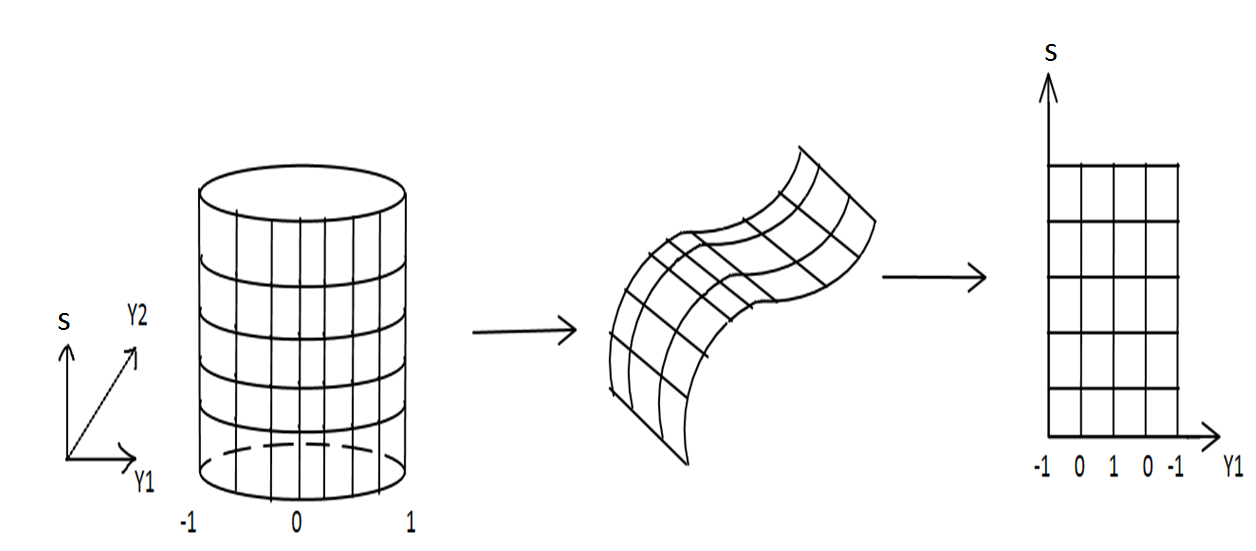} \label{FigureUnfoldingVacuumBoundary}
\end{figure}
When restricted to the vacuum boundary, the DE is two dimensional and we write it as
\begin{eqnarray}
\fl \frac{d Y_{1}}{d \chi} = \frac{4 (\gamma - 1)}{(2 - \gamma)}(1-S^{2})  \ \Bigl[ 4 \gamma ( 3 - 2 \gamma) - Y_{1} L \Bigl] + 4(\gamma - 1)(10 - 7 \gamma) \ ( S^{2} - Y^{2}_{1} ) - Y^{2}_{2}, \label{VacuumBoundarySystem1} \\
\fl \frac{d S}{d \chi} = -\frac{ 4(\gamma - 1)}{(2-\gamma)} S(1-S^{2}) L, \label{VacuumBoundarySystem2} 
\end{eqnarray}
with, $Y^{2}_{2} = 16(\gamma - 1)(3-2 \gamma)(1-Y^{2}_{1})$.\\
The equilibrium points for this system are listed in Table \ref{TableEquilVacuumBoundary1} and the, (signs of the) eigenvalues of the linearization matrix at each equilibrium point are listed in Table \ref{TableofSignofTheEigenvaluesVacuumBoundaryPlus}. These are (the signs of) the second and third eigenvalues from Table \ref{TableofEigenvaluesPlusMinus}.
\begin{table}[h!]
	\caption{Equilibrium points on the vacuum boundary} \label{TableEquilVacuumBoundary1}
	\lineup
	\begin{center}
		\begin{tabular}{@{}*{3}{l}}
			\br                              
			Type of Solution &  $S^{2}$ & $Y_{1}$\\
			\hline
			$\mathrm{INF}^{\pm}$ & $0$ &  $ \pm \frac{\sqrt{4(3 - 2 \gamma)(\gamma - 1)}}{(\gamma - 2)}$  \\ 
			\hline
			$\mathrm{LK}^{\pm}$ & $1$ & $ \pm 1$  \\
			\hline
			$\mathrm{RT}^{\pm}$ & $24 \frac{(\gamma - 1)(3 - 2 \gamma )}{Q_{1}(\gamma)}$ & $ \pm \frac{(4-3\gamma)}{(2 - \gamma)}\sqrt{ \frac{16(\gamma - 1)(3 - 2\gamma)} {Q_{1}(\gamma)} }$ \\
			\hline
		\end{tabular}
	\end{center}	
\end{table}
\begin{table}[h!]
	\caption{Signs of the eigenvalues for the vacuum boundary} \label{TableofSignofTheEigenvaluesVacuumBoundaryPlus}
	\lineup
	\begin{center}
		\begin{tabular}{@{}*{2}{l}}
			\br                              
			Equilibrium Point & $1 < \gamma < \frac{3}{2}$   \\ 
			\hline
			$\mathrm{INF}^{\pm}$ & $\mp \mp$\\ 
			\hline
			$\mathrm{LK}^{\pm}$ & $\pm \pm$\\ 
			\hline
			$\mathrm{RT}^{\pm}$ &  $\mp \pm$\\
			\hline	
		\end{tabular}
	\end{center}
\end{table}

There are only two equilibrium points in the interior of phase space and these are both saddle points. Thus, there cannot be any limit cycles in this system and the phase portrait may be easily deduced by making use of the isoclines. A typical phase portrait is given in Fig \ref{FigureDrawingVacuumBoundary1}.
\begin{figure}[h!]
	\centering
	\includegraphics[width=0.7\textwidth,height=0.3\textheight]{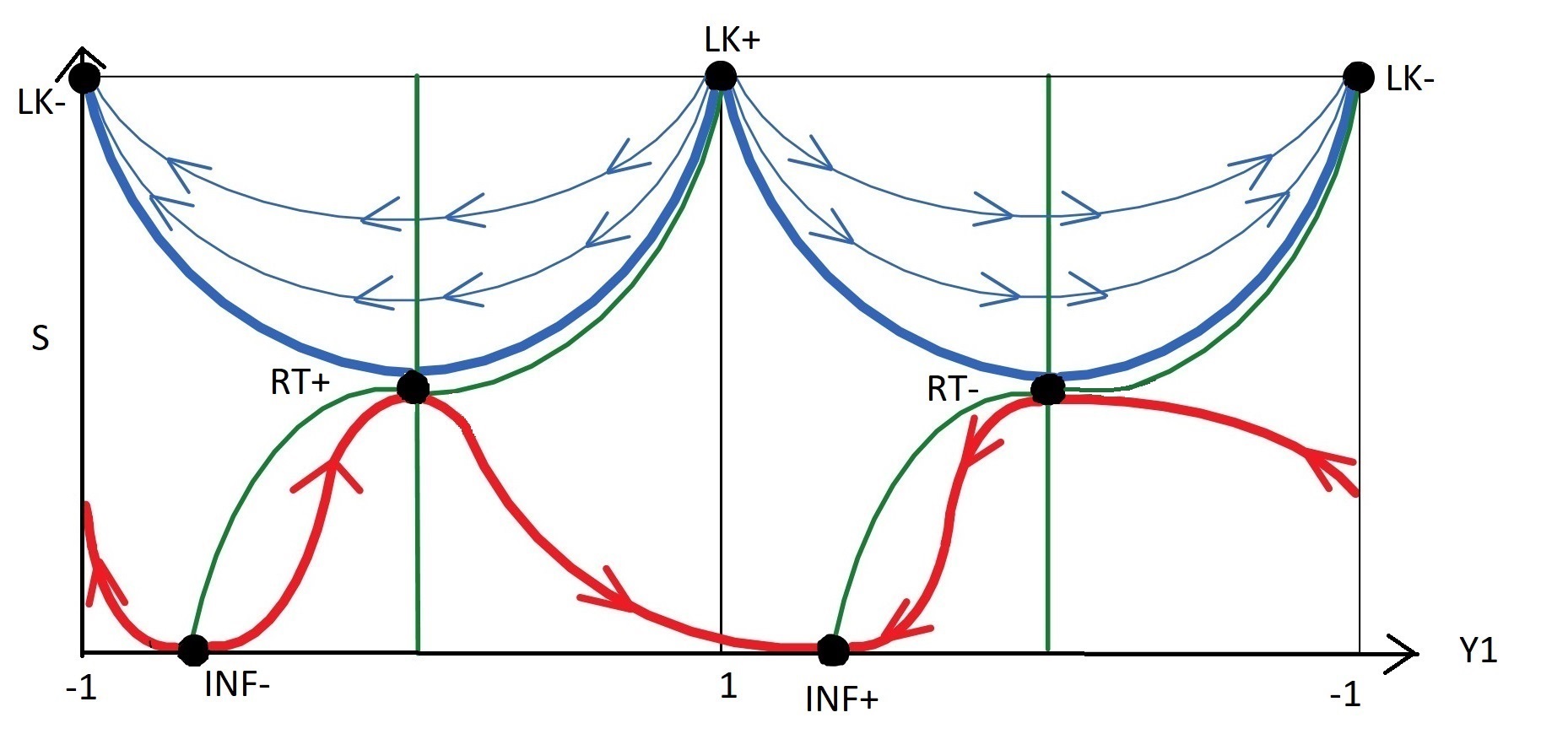}
	\caption{Phase portraits for the vacuum models.}
	\label{FigureDrawingVacuumBoundary1}
\end{figure}

The isoclines $\frac{dS}{d \chi}=0$ are the lines in the phase portraits given by either $S=0$, $S=1$ or the vertical green line $L=0$. The isoclines $\frac{dY_{1}}{d \chi}=0$ are the two green curves passing through the three equilibrium points, INF, RT, and LK. The separatrices of the two RT saddles are indicated in blue and red.

\begin{lemma}
	For $1<\gamma \leq \frac{3}{2}$, there exists an open set of well-behaved vacuum cosmological models in this class with the property that they are asymptotic to LK as $\chi \longrightarrow \pm\infty$.
\end{lemma}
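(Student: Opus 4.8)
The plan is to produce the required open set as a two-dimensional tube of orbits surrounding a heteroclinic connection that lives entirely on the invariant circle $S=1$, working throughout within the two-dimensional vacuum system (\ref{VacuumBoundarySystem1})--(\ref{VacuumBoundarySystem2}). The first ingredient is the observation, read off from Table \ref{TableofSignofTheEigenvaluesVacuumBoundaryPlus}, that for the whole range $1<\gamma\le\frac{3}{2}$ the point $\LK^{+}$ (at $Y_{1}=1$, $S=1$) is a hyperbolic source and $\LK^{-}$ (at $Y_{1}=-1$, $S=1$) is a hyperbolic sink; since the eigenvalues $\lambda_{2},\lambda_{3}$ of Table \ref{TableofEigenvaluesPlusMinus} are real and of a single sign there, both points are nodes.

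Next I would analyse the flow restricted to the invariant set $S=1$. Putting $S=1$ annihilates every $(1-S^{2})$ term, and inserting the constraint $Y_{2}^{2}=16(\gamma-1)(3-2\gamma)(1-Y_{1}^{2})$ collapses (\ref{VacuumBoundarySystem1}) to $\frac{dY_{1}}{d\chi}=-4(\gamma-1)(2-\gamma)(1-Y_{1}^{2})$, which is strictly negative for $-1<Y_{1}<1$. Hence $Y_{1}$ decreases monotonically from $+1$ to $-1$ along the circle $S=1$, so that circle consists of $\LK^{\pm}$ together with the heteroclinic orbit(s) running from the source $\LK^{+}$ to the sink $\LK^{-}$ (two such orbits, one on each branch of the ellipse, merging into one at $\gamma=\frac{3}{2}$ where the ellipse degenerates). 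Each such orbit is by construction asymptotic to $\LK$ as $\chi\to\pm\infty$.

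To upgrade a single connection to an open set I would invoke hyperbolicity together with continuous dependence on initial conditions. Because $\LK^{-}$ is a hyperbolic sink its basin of attraction $B$ is open, and the $S=1$ connection, converging to $\LK^{-}$, lies in $B$. Because $\LK^{+}$ is a hyperbolic source, its local unstable manifold fills a full neighbourhood $N$ of $\LK^{+}$, so every orbit meeting $N$ has $\alpha$-limit $\LK^{+}$. Choosing a point $p$ on the $S=1$ connection sufficiently close to $\LK^{+}$ places $p$ in the open set $N\cap B$; any open $V$ with $p\in V\subset N\cap B$ then consists of orbits with $\alpha$-limit $\LK^{+}$ and $\omega$-limit $\LK^{-}$, i.e.\ orbits asymptotic to $\LK$ as $\chi\to\pm\infty$. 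Since $V$ is genuinely two-dimensional it contains points with $S<1$, namely models with $\Sigma_{13}\ne0$ that are not the two-HO-KVF models of the $S=1$ face, and taking $V$ inside a thin tube about the $S=1$ orbit keeps these models uniformly bounded away from the unphysical boundary $S=0$, so they are well-behaved.

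The conceptual content here is light; the main items to verify are bookkeeping. One must confirm that the eigenvalue signs in Table \ref{TableofSignofTheEigenvaluesVacuumBoundaryPlus} really certify source/sink behaviour over the entire interval, checking in particular that $\lambda_{2},\lambda_{3}$ stay nonzero at the endpoint $\gamma=\frac{3}{2}$ (where $3-2\gamma=0$ degenerates the vacuum ellipse but not the local character of $\LK^{\pm}$), and that the coordinate identification at the cut $Y_{1}=-1$ does not alter the local picture near $\LK^{-}$. I would emphasise that this argument needs only the openness of a hyperbolic sink's basin, which is automatic, and so does not rely on the earlier index-count ruling out limit cycles; the global phase portrait of Fig.~\ref{FigureDrawingVacuumBoundary1} serves only as confirmation.
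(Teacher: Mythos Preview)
Your approach is essentially the same as the paper's: exhibit an open set of orbits running from $\LK^{+}$ to $\LK^{-}$ on the vacuum surface, then argue boundedness. Your construction of that open set is in fact more explicit than the paper's, which simply reads it off the phase portrait; your computation of $\frac{dY_{1}}{d\chi}\big|_{S=1}=-4(\gamma-1)(2-\gamma)(1-Y_{1}^{2})$ and the source/sink tube argument are correct and give a cleaner self-contained justification.

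The one place where you are thinner than the paper is the final clause. ``Bounded away from $S=0$'' guarantees that the \emph{dimensionless} variables $A,N_{\times},\Sigma_{13}$ stay finite, but ``well-behaved'' in this paper means the \emph{physical} variables are bounded on each time slice, and those carry factors of the expansion $\theta$. The paper closes this by noting that local asymptotic analysis near $\LK^{\pm}$ shows $\theta$ remains bounded there (via $\boldsymbol\partial_{1}\theta=-3\dot U\theta$), so the physical quantities are bounded as well. You should add that sentence; without it your last step is a non sequitur, since bounded Hubble-normalised variables do not by themselves control curvature or shear in absolute terms.
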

\begin{proof}
It is clear from the phase portrait that there is an open set of trajectories which leave $\LK^{+}$ and terminate at $\LK^{-}$. Each one of the curves corresponds to a vacuum model, within this class, and all of its associated dimensionless variables are bounded. Local asymptotic analysis shows that the expansion is bounded near the equilibrium points and thus the physical variables are also bounded on phase space.
\end{proof}
The line element for the vacuum models in this class was written as a differential equation in \cite{van1988class}, No analysis of the DE was provided. Here we have shown that an open set of these vacuum models are well-behaved and can be considered as inhomogeneous generalization of the LK plane wave models. Asymptotic analysis, similar to that provided in the appendix for the perfect fluid models, reveals that models corresponding to trajectories that either leave or terminate at the points labelled INF have curvature scalars that are unbounded and the models are not well-behaved unless, possibly, when $\gamma = \frac{4}{3}$.
\section{Perfect Fluid Models}

We begin with a result about some of the trajectories in the phase space.

\begin{prop}\label{PropExistenceofOpenSetofTraject}
	\phantom{a}
	\begin{enumerate}[label=(\alph*)]
		\item For $1 < \gamma < \frac{10}{9}$ there exists an open set of trajectories which have $C^{+}$ as a source and $C^{-}$ as a sink.
		\item For $\frac{6}{5} < \gamma < \frac{10}{7}$ there exists an open set of trajectories which have $C^{-}$ as a source and $C^{+}$ as a sink.
		\item For $\frac{10}{7}< \gamma < \frac{3}{2}$ there exits an open set of trajectories which have $LK^{+}$ as a source and $LK^{-}$ as a sink.
	\end{enumerate}
\end{prop}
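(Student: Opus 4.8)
The plan is to handle all three cases by one template: in each case the asserted source is a hyperbolic repeller and the asserted sink a hyperbolic attractor, and the required open set is obtained by thickening a single heteroclinic orbit that already lives in a lower-dimensional invariant subset.

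First I would read the hyperbolicity off the eigenvalue tables. In case (a), for $1<\gamma<\frac{10}{9}$, the explicit eigenvalues show $\Collins^{+}$ has all three eigenvalues with positive real part (a repeller) and $\Collins^{-}$ all three with negative real part (an attractor): $\lambda_{1}\propto(10-9\gamma)$ is nonzero on the open interval, and $\lambda_{2},\lambda_{3}$ form a complex-conjugate pair with real part $\propto(6-5\gamma)>0$, so the points are genuinely hyperbolic (the degeneracy $\lambda_{1}=0$ occurs only at the excluded endpoint $\gamma=\frac{10}{9}$). The same inspection gives, in case (b), $\Collins^{-}$ a repeller and $\Collins^{+}$ an attractor, and in case (c) $\LK^{+}$ a repeller and $\LK^{-}$ an attractor, each hyperbolic on its open $\gamma$-range. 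By the stable/unstable manifold theorem, each repeller $P$ then has a neighbourhood $U^{+}$ all of whose points have $\alpha$-limit $\{P\}$, and each attractor $Q$ a neighbourhood $U^{-}$ all of whose points have $\omega$-limit $\{Q\}$.

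Next I would exhibit one connecting orbit $\Gamma$ from repeller to attractor. In cases (a) and (b) both $\Collins^{\pm}$ lie on the invariant set $S=1$, which is exactly the two-dimensional diagonal $G_{2}$ system analysed in \cite{hewitt1988qualitativeI}; its phase portrait contains the connection $\Collins^{+}\!\to\Collins^{-}$ (respectively $\Collins^{-}\!\to\Collins^{+}$). A self-contained alternative is to argue on the compact invariant plane $S=1$ directly: there the only repeller and attractor are $\Collins^{\pm}$ and the remaining equilibria $\LK^{\pm}$ are saddles, so once periodic orbits are excluded (Poincar\'e--Bendixson together with a Bendixson--Dulac/monotone function) the unstable set of the source must reach the sink. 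In case (c) the orbit $\Gamma$ is supplied directly by the Lemma of the previous section: a trajectory running from $\LK^{+}$ to $\LK^{-}$ within the vacuum boundary $Y_{4}=0$.

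Finally I would thicken $\Gamma$ to an open tube. Pick $p_{0}\in\Gamma\cap U^{+}$ whose forward image $p_{1}=\phi_{T}(p_{0})$ at some finite time $T$ lies in $U^{-}$; continuity of the time-$T$ flow map yields an open $W\ni p_{0}$ with $\phi_{T}(W)\subset U^{-}$. Every $q\in W$ then has $\alpha$-limit the source (since $q\in U^{+}$) and $\omega$-limit the sink (since $\phi_{T}(q)\in U^{-}$), so $W$ is an open set of trajectories of the required type, and the reflection symmetry $(\chi,Y_{1},Y_{2},S)\mapsto(-\chi,-Y_{1},-Y_{2},S)$ interchanges the two orientations inside each case. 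I expect the only real obstacle to be the anchoring step in (a) and (b): certifying that the flow on $S=1$ genuinely carries the source onto the sink rather than onto a saddle or a closed orbit. This is precisely where the established phase portrait of \cite{hewitt1988qualitativeI} (or the monotone/Bendixson--Dulac function) is indispensable, whereas in (c) the Lemma removes this difficulty outright.
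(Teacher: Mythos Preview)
Your proposal is correct and follows essentially the same strategy as the paper: exhibit a heteroclinic orbit inside a two-dimensional invariant subset, verify that the endpoints are hyperbolic source/sink in the full three-dimensional system, and then invoke the approximation/shadowing property (which you spell out explicitly as the thickening-by-continuity argument). The paper compresses your last step into a one-line appeal to the ``approximation property of orbits'' \cite{sibirskiui1975introduction,wainwright2005dynamicalSystems}.

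There is one small but genuine difference: for case (c) the paper anchors the heteroclinic orbit in the invariant set $S=1$ (again via \cite{hewitt1988qualitativeI}), treating all three cases uniformly, whereas you instead use the vacuum-boundary trajectory supplied by Lemma~4.1. Both choices are valid, since $\LK^{\pm}$ sits on the intersection $S=1\cap\{Y_{4}=0\}$; your route has the mild advantage that the connecting orbit is established internally rather than by citation, while the paper's route keeps the argument entirely inside one invariant surface.
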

\begin{proof}
	The work of \cite{hewitt1988qualitativeI} demonstrates the existence of trajectories in the invariant set $S=1$ which have the behaviour stated in the proposition. For the ranges of $\gamma$, stated in the proposition, the sources and sinks in the invariant set $S=1$, extend their stability to the full three-dimensional phase space. We conclude from the approximation property of orbits \cite[p.11]{sibirskiui1975introduction} \cite[p.104]{wainwright2005dynamicalSystems} that there exists an open set of trajectories flowing between the stated equilibrium points.
\end{proof}
We continue with a result about the corresponding cosmological models.
\begin{corollary}\label{CorollaryExistenceOfCosmodels}
	\phantom{a}
	\begin{enumerate}[label=(\alph*)]
		\item For $1 < \gamma < \frac{10}{9}$ and $\frac{6}{5} < \gamma < \frac{10}{7}$ there exists an open set of well-behaved cosmological models within this class that are asymptotically spatially homogeneous and matter dominated as $\chi \longrightarrow \pm \infty$.
		\item For $ \gamma = \frac{10}{7}$ there exists an open set of well-behaved cosmological models within this class that are asymptotically spatially homogeneous and vacuum dominated as $\chi \longrightarrow \pm \infty$.
		\item For $ \frac{10}{7} < \gamma < \frac{3}{2}$ there exits an open set of well-behaved cosmological models within this class that are acceleration dominated and vacuum dominated as $\chi \longrightarrow \pm \infty$.
	\end{enumerate}
\end{corollary}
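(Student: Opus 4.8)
The plan is to translate each dynamical statement of Proposition~\ref{PropExistenceofOpenSetofTraject} into a statement about the associated spacetimes by reading the physical character of the limiting equilibrium points off Table~\ref{TableofEquilibriumPointsGeneral}. Proposition~\ref{PropExistenceofOpenSetofTraject} already furnishes, on each stated $\gamma$-interval, an open set of trajectories connecting a definite source to a definite sink: the $\Collins^{\pm}$ points in parts~(a) and~(b), and the $\LK^{\pm}$ points in part~(c). I would first record from Table~\ref{TableofEquilibriumPointsGeneral} that the Collins $\mathrm{VI}_{h}$ points have $S^{2}=1$, so they admit two hypersurface-orthogonal Killing vector fields and are spatially homogeneous of Bianchi type $\mathrm{VI}_{h}$, with $Y_{4}=\frac{3(10-7\gamma)(\gamma-1)}{3\gamma-2}>0$ for $1<\gamma<\frac{10}{7}$ (matter dominated) and $Y_{4}=0$ at $\gamma=\frac{10}{7}$ (vacuum). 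The plane-wave points $\LK^{\pm}$ have $S^{2}=1$, $Y_{4}=0$ (vacuum), and scaled acceleration $L=Q_{2}(\gamma)\,Y_{1}\neq0$, so they are acceleration dominated.

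The second task is to promote ``trajectory joining two equilibria'' to ``well-behaved cosmological model.'' Boundedness of the dimensionless variables is immediate, since each such trajectory lies in the interior of the compact truncated cylinder~(\ref{EllipticalCylinder}), remains bounded away from the unphysical set $S=0$, and limits onto equilibrium points at both ends. The genuine content is that the \emph{dimensional} quantities --- the expansion, the fluid acceleration, and the curvature scalars formed from the Electric and Magnetic Weyl tensors --- stay bounded at large spatial distance. I would obtain this from the local asymptotic analysis of the Appendix: linearising about each endpoint, one computes the decay rate of the approach to the self-similar state and verifies that the length scale controlling the dimensional fields has a finite limit, so that no curvature invariant diverges. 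This supplies ``well-behaved'' in each case.

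Granting boundedness, the three physical labels now follow from the endpoint data. Asymptoting to $\Collins$ as $\chi\to\pm\infty$ means the spatial inhomogeneity switches off at large spatial distance, which is \emph{asymptotically spatially homogeneous}; the value of $Y_{4}$ there separates \emph{matter dominated} ($Y_{4}>0$, part~(a)) from \emph{vacuum dominated} ($Y_{4}=0$). For part~(c), asymptoting to $\LK$ gives vacuum domination from $Y_{4}=0$ and \emph{acceleration dominated} from $L\neq0$. This closes parts~(a) and~(c) directly from Proposition~\ref{PropExistenceofOpenSetofTraject}.

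The main obstacle is the boundedness step of the second paragraph: a bounded dynamical trajectory can still describe a singular spacetime if the expansion or the metric scales blow up, so the Appendix asymptotic analysis is indispensable and is the only substantially technical part. A further subtlety concerns part~(b), at the isolated value $\gamma=\frac{10}{7}$, which Proposition~\ref{PropExistenceofOpenSetofTraject} omits. There the Collins point crosses onto the vacuum boundary and in fact coincides with $\LK$ (both sit at $Y_{1}^{2}=1$, $Y_{2}=0$, $S^{2}=1$, $Y_{4}=0$), while one eigenvalue in Table~\ref{TableofEigenvaluesPlusMinus} vanishes, so the endpoint is non-hyperbolic and the clean source/sink picture degenerates. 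I would therefore handle $\gamma=\frac{10}{7}$ either by a continuity argument --- taking the open set from the adjacent range $\frac{6}{5}<\gamma<\frac{10}{7}$ and letting the Collins matter density tend to zero as $\gamma\to\frac{10}{7}$ --- or by a direct centre-manifold analysis of the merged point; in both cases the limiting state is the vacuum Collins $\mathrm{VI}_{h}$ model, giving the asserted \emph{asymptotically spatially homogeneous and vacuum dominated} behaviour.
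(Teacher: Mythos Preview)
Your outline is sound and lands at the right conclusions, but the route you propose for the boundedness step is heavier than what the paper actually does, and in one place points to the wrong tool. The paper does \emph{not} appeal to any linearised asymptotic analysis at $\Collins^{\pm}$ or $\LK^{\pm}$; it argues algebraically. Since $S\to 1$ along the trajectories of Proposition~\ref{PropExistenceofOpenSetofTraject}, and $Y_{4}$ is quadratic in $Y_{1},Y_{2}$, the dimensionless density $\Omega=Y_{4}S^{-2}$ is bounded; the companion paper's relation $\Omega(\chi)=\theta(\chi)^{(2-\gamma)/(\gamma-1)}$ then bounds the expansion $\theta$ directly, and hence every physical variable. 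No decay rates, no eigenvalues, no length-scale estimates are needed. Note also that the Appendix you invoke is devoted exclusively to $\INF^{\pm}$ and shows those endpoints are \emph{badly} behaved; it contains no analysis near $\Collins$ or $\LK$, so your proposal would require fresh work where the paper needs none.

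Your characterisations of the labels are close but imprecise compared with the paper. ``Asymptotically spatially homogeneous'' is defined by $\lim_{\chi\to\pm\infty}\dot U=0$ and $\lim_{\chi\to\pm\infty}\dot U^{a}{}_{;a}=0$ (the paper cites the identity $\dot U^{a}{}_{;a}/\theta^{2}=\tfrac{1}{6}(3\gamma-2)(\Omega-\Omega_{e})$ to see the second limit), and ``acceleration dominated'' by the negations of these; merely saying ``spatial inhomogeneity switches off'' or ``$L\neq 0$'' is not quite the same statement. For part~(b) at $\gamma=\tfrac{10}{7}$, the paper's argument is simply that $\Collins$ and $\LK$ coalesce, so the limiting state inherits $\dot U\to 0$ from the $\Collins$ side while now having $\Omega_{e}=0$; your proposed continuity or centre-manifold treatment would make this more rigorous, but the paper is content with the coalescence observation.
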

\begin{proof}
	For the trajectories referred to in Proposition \ref{PropExistenceofOpenSetofTraject}, the variables $Y_{1}$, $Y_{2}$, and $S$ are bounded on phase space, and in addition $\lim\limits_{\chi \to \pm \infty} S = 1$. Since the scaled energy density $Y_{4}$ is quadratic in the variables $Y_{1}$ and $Y_{2}$, then this variable is bounded on phase space, and the spatial part of the energy density, $\Omega(\chi) =  Y_{4}(\chi) S^{2}(\chi)$ is bounded for any of these trajectories. We have already shown in \cite{hewitt2020}, equation (8), that the spatial part of the expansion is determined from the spatial part of the energy density by $\Omega(\chi) = \theta^{\frac{2-\gamma}{\gamma-1}}(\chi)$. It now follows that all of the physical variables, in particular the energy density, acceleration, and shear, for each model associated with any of the trajectories in Proposition \ref{PropExistenceofOpenSetofTraject}, are bounded and so we conclude that the cosmological models are all well-behaved.\\
	Furthermore, the models corresponding to trajectories that are asymptotic to C at large spatial distances, are matter dominated since $\lim\limits_{\chi \to \pm \infty} \Omega = \Omega_{e}$, with $\Omega_{e}$ the value of $\Omega$ at the equilibrium point. In addition $\lim\limits_{\chi \to \infty \pm} \dot{U} =0$ and $\lim\limits_{\chi \to \pm \infty} \dot{U}^{a}_{\ ; a} = 0$\footnote{It is shown in \cite{hewitt1988qualitativeI} that $\frac{\dot{U}^{a}_{\ ; a}}{\theta^{2}} = \frac{1}{6}(3 \gamma - 2)(\Omega - \Omega_{e})$ for these models.}, since these models are approaching a non-vacuum equilibrium point, and so we refer to these models as being asymptotically spatially homogeneous. When $\gamma =\frac{10}{7}$ the C equilibrium points coalesce with the LK equilibrium points, and so the limiting behaviour retains its asymptotic spatially homogeneous feature but is now vacuum dominated since $\lim\limits_{\chi \to \pm \infty} \Omega = 0$. \\
	The trajectories which are asymptotic to LK are vacuum dominated with $\lim\limits_{\chi \to \pm \infty} \Omega = 0$. We also label them as acceleration dominated since $\lim\limits_{\chi \to \pm \infty} \dot{U} \neq 0$ and $\lim\limits_{\chi \to \pm \infty} \dot{U}^{a}_{\ ;a} \neq 0$.
\end{proof}

\subsection{Other Possible Behaviour For Perfect Fluid Models In This Class}

A natural question to ask is whether or not there are other possible behaviours for well-behaved cosmological models in this class. We explain that the answer is no for all values of the equation state parameter $\gamma$, with $1 < \gamma < \frac{3}{2}$, $\gamma \neq \frac{6}{5}$, $\gamma \neq \frac{10}{9}$. These two exceptional values of $\gamma$ are discussed separately. \\
We first introduce an extended version of the Monotonicity Principle \cite[p.103]{wainwright2005dynamicalSystems}. In summary, this result reveals that if we have a function that is strictly monotone on the solution curves of a differential equation within a positive invariant set then the trajectories are future asymptotic ( in the $\omega$-limit set sense) to a particular subset of points on the boundary of this positive invariant set.
\begin{prop}\label{PropExtendedMon}
	Let $S$ be a positive invariant set of a DE on $\Reals^{n}$. Suppose $G: S \longrightarrow \Reals$ is a $C^{1}$ function whose range is $(a,b)$ with $a<b$, $a \in \Reals \cup \{ -\infty \}$ and $b \in \Reals \cup \{ \infty \}$. If $G$ is decreasing on the orbits of the DE within $S$ then $ \forall \mathbf{x} \in S$,	$\omega(\mathbf{x}) \subseteq \{ \mathbf{s} \in \overline{S} \setminus S | \lim\limits_{\mathbf{y} \to \mathbf{s} } G(\mathbf{y}) \neq b  \}$. In addition, if $S$ is an open set, then $\alpha(\mathbf{x}) \cap S= \emptyset $.
\end{prop}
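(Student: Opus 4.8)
The plan is to prove this as a two-part refinement of the standard Monotonicity Principle, handling the $\omega$-limit claim and the $\alpha$-limit claim separately, and in both cases exploiting two elementary facts: (i) the $\omega$- and $\alpha$-limit sets of a flow are invariant, and (ii) a bounded monotone function of one real variable converges. Write $\phi_{t}$ for the flow of the DE. Since $G$ is decreasing along orbits in $S$ and its range is bounded below by $a$, the map $t \mapsto G(\phi_{t}(\mathbf{x}))$ is, for $\mathbf{x} \in S$, decreasing and bounded below on $[0,\infty)$ (here I use that $S$ is positively invariant, so the forward orbit stays in $S$, where $G$ is defined). Hence $L := \lim_{t \to \infty} G(\phi_{t}(\mathbf{x}))$ exists and satisfies $a \leq L \leq G(\mathbf{x}) < b$.

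For the $\omega$-limit statement, first note $\omega(\mathbf{x}) \subseteq \overline{S}$ because the forward orbit lies in $S$. The key step is to show $\omega(\mathbf{x}) \cap S = \emptyset$. Given $\mathbf{p} \in \omega(\mathbf{x})$, choose $t_{n} \to \infty$ with $\phi_{t_{n}}(\mathbf{x}) \to \mathbf{p}$; if $\mathbf{p} \in S$, continuity of $G$ forces $G(\mathbf{p}) = L$, and the same reasoning gives $G \equiv L$ on $\omega(\mathbf{x}) \cap S$. But $\omega(\mathbf{x})$ is invariant and $S$ is positively invariant, so the forward orbit of $\mathbf{p}$ lies in $\omega(\mathbf{x}) \cap S$, where $G$ is simultaneously constant ($= L$) and strictly decreasing — a contradiction. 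Thus $\omega(\mathbf{x}) \subseteq \overline{S} \setminus S$. To sharpen this, suppose some $\mathbf{p} \in \omega(\mathbf{x})$ had $\lim_{\mathbf{y} \to \mathbf{p}} G(\mathbf{y}) = b$; then $G(\phi_{t_{n}}(\mathbf{x})) \to b$ along the approaching sequence, contradicting $G(\phi_{t_{n}}(\mathbf{x})) \to L < b$. This yields the stated inclusion.

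The $\alpha$-limit statement is where the real difficulty lies, since $S$ is only \emph{positively} invariant and the time-reversed version of the argument above does not apply directly: the backward orbit may leave $S$. The plan is to first establish a no-return lemma: if $\phi_{t_{0}}(\mathbf{x}) \notin S$ for some $t_{0} < 0$, then $\phi_{t}(\mathbf{x}) \notin S$ for every $t < t_{0}$, for otherwise a point $\phi_{t}(\mathbf{x}) \in S$ with $t < t_{0}$, flowed forward by $t_{0}-t > 0$, would force $\phi_{t_{0}}(\mathbf{x}) \in S$ by positive invariance. Consequently $\{\, t \leq 0 : \phi_{t}(\mathbf{x}) \in S \,\}$ is an interval $(t^{*},0]$ with $t^{*} \in [-\infty,0)$, and I would split into two cases. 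If $t^{*} = -\infty$, the backward orbit stays in $S$, so $t \mapsto G(\phi_{t}(\mathbf{x}))$ is monotone and bounded on $(-\infty,0]$ and the $\omega$-argument runs verbatim in reverse to give $\alpha(\mathbf{x}) \cap S = \emptyset$. If $t^{*} > -\infty$, then $\phi_{t}(\mathbf{x}) \notin S$ for all $t < t^{*}$; since $S$ is open its complement is closed, so every limit point of these points — in particular all of $\alpha(\mathbf{x})$ — lies in $\Reals^{n} \setminus S$, again giving $\alpha(\mathbf{x}) \cap S = \emptyset$.

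The main obstacle, and the one place where the openness of $S$ is genuinely used, is the $\alpha$-limit case $t^{*} > -\infty$: one must rule out re-entry of the backward orbit into $S$ and then convert ``eventually outside the open set $S$'' into ``limit set disjoint from $S$,'' which is exactly the closedness of the complement. The no-return lemma is the linchpin; once it is in hand, both cases are short. I expect the remaining bookkeeping — checking that continuity passes to the limit and that the forward orbit of an $\omega$- or $\alpha$-point stays inside the relevant limit-set-intersect-$S$ — to be routine.
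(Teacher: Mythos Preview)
Your argument is correct and, in fact, considerably more informative than the paper's own proof, which consists entirely of a citation: the paper simply refers the reader to Proposition~A1 of LeBlanc et al.\ (1995) and remarks that the only change is relaxing ``invariant'' to ``positively invariant.'' No details are given there about how that relaxation is handled.

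Your proof fills in precisely that gap. The $\omega$-limit portion is the standard Monotonicity Principle argument and matches what one finds in the cited reference. The genuinely new content is your treatment of the $\alpha$-limit claim: the no-return lemma (positive invariance of $S$ forbids the backward orbit from re-entering $S$ once it has left) together with the dichotomy on $t^{*}$ is exactly the right way to accommodate the weaker hypothesis, and your identification of the case $t^{*}>-\infty$ as the sole place where openness of $S$ is used is a clarification the paper does not provide. What the paper's approach buys is brevity; what yours buys is a self-contained argument that makes transparent why positive invariance plus openness suffices for the $\alpha$-limit conclusion even though the backward orbit is not confined to $S$.
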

A similar result holds if $G$ is increasing with $\omega(\mathbf{x}) \subseteq \{  s \in \overline{S} \setminus S | \lim\limits_{ \mathbf{y} \to \mathbf{s}} G(\mathbf{y}) \neq a \}$

\begin{proof}
	See proposition A1 in \cite[p.536]{leblanc1995asymptotic}. The only extension here is that we restrict the invariant set of proposition A1 to be positively invariant.
\end{proof}
Proposition \ref{PropExtendedMon} provides information about where the trajectories end up. The proposition also indicates, to some extent, where it came from: if $S$ is open then the trajectories cannot be past asymptotic( in the $\alpha$-limit set sense) to any point within $S$.

We may also extend these results to the case of a negatively invariant set by exchanging $\alpha(x)$ and $\omega(x)$.\\
The application of either of these results require the determination of the monotone function $G$ for the DE (\ref{ODE-Y1Y2D2-1})-(\ref{ODE-Y1Y2D2-3}). The following function fulfills that role. Let
\begin{eqnarray}
	G:=\frac{ Y^{ \frac{81}{40} Q_{2}(\gamma)}_{4}  (1 - S^{2} )^{\frac{81}{40}(2-\gamma)(10-7\gamma)}  }{ S^{ \frac{81}{5} \gamma (3 - 2 \gamma)} }, \label{functionG}
\end{eqnarray}
then,
\begin{eqnarray}
	\frac{d G}{ d \chi} =  - \frac{81}{20} ( 10 - 9\gamma) ( 2 - \gamma)^{2} Y_{2} G. \label{derivativeofG}
\end{eqnarray}
For $1 < \gamma < \frac{3}{2}$, the 2-space $G=c \in \Reals$ has level curves ( $G=c$, $S=S_{0}$, constant ) which are ellipses ($Y_{4}=$ constant). Fig \ref{DiagramDescribingMonotoneGeneralfunction} illustrates some level surfaces of $G$.\\
At this point it is instructive to examine the sign of the sign of the derivative of $Y_{2}$ along the solution curves, when these curves intersect the plane, $P$, given by
\begin{eqnarray}
	P = \{ (Y_{1},Y_{2},S): -1 \leq Y_{1} \leq 1, Y_{2}=0, 0 \leq S \leq 1 \}.
\end{eqnarray}
We have $\frac{dY_{2}}{d\chi}|_{Y_{2}=0}=(12\gamma-4-7\gamma^{2})+(5\gamma-6)(2-\gamma)Y^{2}_{1}+4(\gamma-1)(3\gamma-4)S^{2}$, and we note that $12\gamma-4-7\gamma^{2}$ is positive on $(1,\gamma_{t})$ and negative on $(\gamma_{t},\frac{3}{2})$ with $\gamma_{t} \approx 1.26$.\\
If $1<\gamma\leq \frac{6}{5}$ then $\frac{dY_{2}}{d \chi}|_{Y_{2}=0} >0$ on the plane $P$. Then $P$ acts as a one way membrane for the flow of solution curves, in that if a trajectory crosses this plane then it does so exactly once from negative $Y_{2}$ to positive $Y_{2}$.\\
If $\frac{4}{3} \leq \gamma \leq \frac{3}{2}$ then $\frac{d Y_{2}}{d \chi}|_{Y_{2}=0} < 0$ on the plane $P$, and, for this range of $\gamma$, if a trajectory crosses $P$ then it does so exactly once moving from positive $Y_{2}$ to negative $Y_{2}$.\\
We define the open sets $Z_{+}$ and $Z_{-}$ by:
\begin{eqnarray}
\fl Z_{+} &= \{ (Y_{1},Y_{2}, S) \in \Reals^{3} \ | \ Y_{4} > 0,  Y_{2}>0, \ 0 < S < 1 \   \}, \nonumber \\
\fl Z_{-} &= \{   (Y_{1},Y_{2}, S) \in \Reals^{3} \ | \ Y_{4} > 0,  Y_{2}<0, \ 0 < S < 1 \ \}. \nonumber
\end{eqnarray} 
It follows from the previous discussion that for the range $ 1 <\gamma \leq \frac{6}{5}$, $Z_{+}$ is positively invariant and $Z_{-}$ is negatively invariant, while if $\frac{4}{3} \leq \gamma < \frac{3}{2}$ then $Z_{+}$ is negatively invariant and $Z_{-}$ is positively invariant.\\

\begin{figure}[h!]
	\caption{The surfaces given by $G=c$. The blue surface with plaid pattern describes a larger value of $\beta$ compared to the gray surfaces.}
	\centering
	\includegraphics[width=0.7\textwidth]{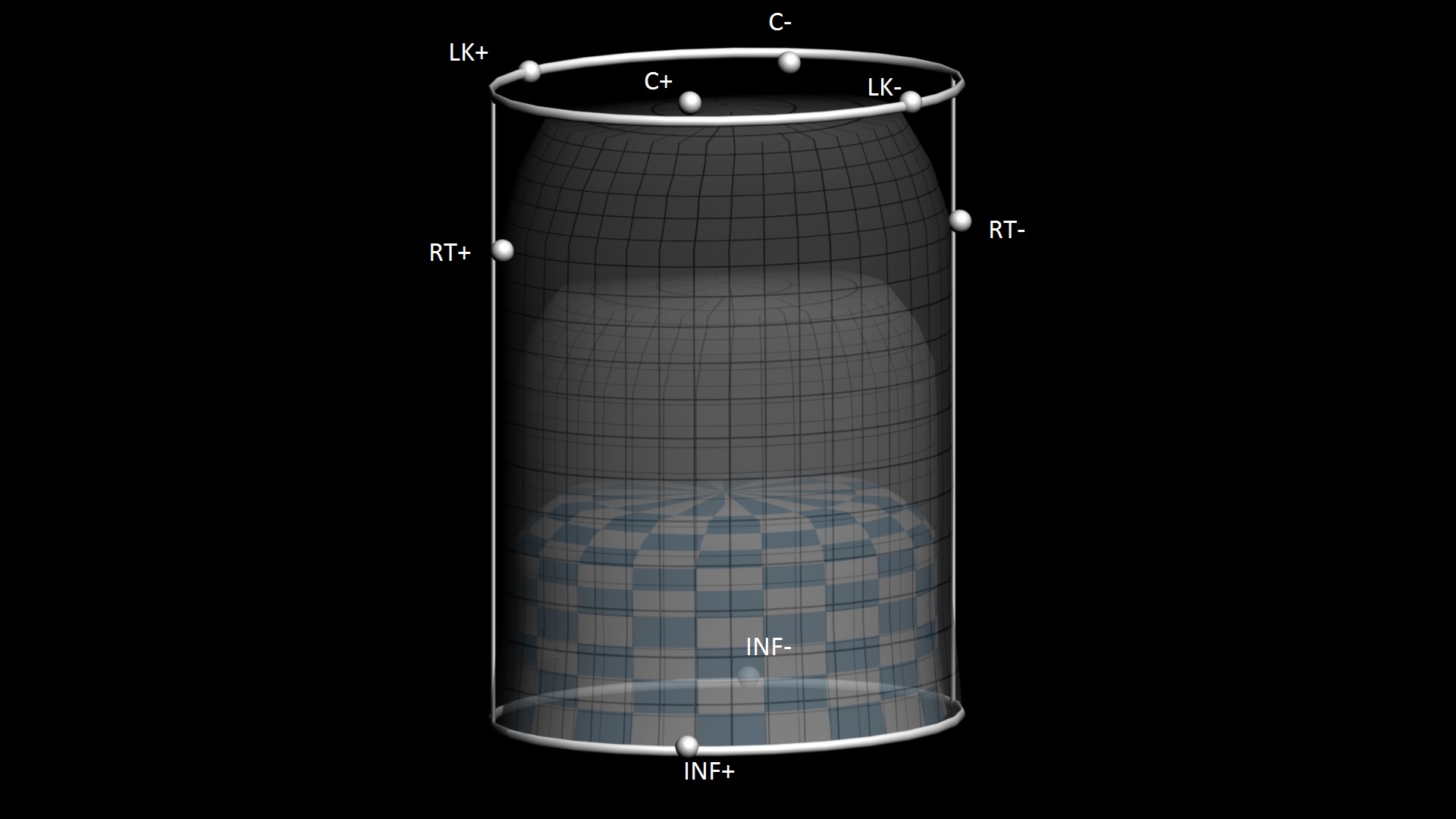}
	\label{DiagramDescribingMonotoneGeneralfunction}
\end{figure}
\begin{prop}\label{TheoremAsymptoticToCollins}
	If $1 < \gamma < \frac{10}{9}$ or $ \frac{4}{3} < \gamma < \frac{10}{7}$ then the only well-behaved perfect fluid cosmological models in this class are asymptotic to the Collins spatially homogeneous model at large spatial distance ($\chi \longrightarrow \pm \infty$). These models are asymptotically spatially homogeneous and matter dominated.
\end{prop}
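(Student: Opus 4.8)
The plan is to combine the monotone function $G$ from equation (\ref{functionG}) with the membrane property of the plane $P$ and the local stability data already assembled in Tables \ref{TableofSignofTheEigenvalues}--\ref{TableofEigenvaluesPlusMinus}. The strategy is to show that for the stated ranges of $\gamma$ the only equilibrium points in $\overline{S}\setminus S$ at which trajectories from the invariant sets $Z_{+}$ and $Z_{-}$ can accumulate are the Collins points $\Collins^{\pm}$, and then to verify that the resulting models are well-behaved and matter-dominated using the criteria in the proof of Corollary \ref{CorollaryExistenceOfCosmodels}.

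First I would fix attention on the positively invariant set. For $1<\gamma\leq\frac{6}{5}$ the set $Z_{+}$ is positively invariant and $Z_{-}$ is negatively invariant, while for $\frac{4}{3}\leq\gamma<\frac{3}{2}$ the roles are reversed; in either case the one-way membrane $P$ forces every interior trajectory into a region on which $Y_{2}$ has a definite sign. On that region I would read off the sign of $\frac{dG}{d\chi}=-\frac{81}{20}(10-9\gamma)(2-\gamma)^{2}Y_{2}G$ from equation (\ref{derivativeofG}): since $(2-\gamma)^{2}>0$ and $G>0$ on the interior, and since $10-9\gamma$ does not vanish on the stated ranges (it changes sign at $\gamma=\frac{10}{9}$, which is excluded), the derivative has a fixed sign determined by the sign of $Y_{2}$ and of $10-9\gamma$. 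This makes $G$ strictly monotone along interior orbits, so Proposition \ref{PropExtendedMon} applies: the $\omega$-limit set of each interior trajectory lies in the boundary $\overline{S}\setminus S$, and on the open invariant set the $\alpha$-limit cannot meet the interior either.

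The next step is to locate where on the boundary the limit sets can live. The boundary of $Z_{\pm}$ consists of the vacuum surface $Y_{4}=0$, the unphysical face $S=0$, the homogeneous face $S=1$, and the membrane $P$. The monotonicity of $G$ together with its behaviour as a level function (the level surfaces are the ellipses $Y_{4}=\text{const}$, $S=\text{const}$ of Fig \ref{DiagramDescribingMonotoneGeneralfunction}) excludes accumulation on the faces where $G\to b$ (respectively $G\to a$); in particular the blow-up of $G$ as $S\to 0$ or as $Y_{4}\to 0$ rules out the unphysical and vacuum faces as $\omega$-limits for the appropriate sign of the derivative. That leaves only the face $S=1$, which is exactly the invariant set analysed in \cite{hewitt1988qualitativeI}. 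On $S=1$ the eigenvalue data of Table \ref{TableofSignofTheEigenvalues} show that for $1<\gamma<\frac{10}{9}$ the Collins points are hyperbolic sources/sinks and $\RT$, $\LK$ are saddles, while for $\frac{6}{5}<\gamma<\frac{10}{7}$ the Collins points are again the only attractors/repellers compatible with the sign of the monotone drift; combining the planar analysis on $S=1$ with the transverse eigenvalue (the first entry of each triple) forces the interior limit sets onto $\Collins^{\pm}$.

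Finally I would convert the dynamical statement into the cosmological one. Once a trajectory is shown to be future and past asymptotic to $\Collins^{\pm}$, the argument in Corollary \ref{CorollaryExistenceOfCosmodels} gives boundedness of $Y_{1},Y_{2},S$ with $\lim_{\chi\to\pm\infty}S=1$, hence boundedness of $Y_{4}$ and of $\Omega(\chi)=Y_{4}(\chi)S^{2}(\chi)$, and through $\Omega=\theta^{(2-\gamma)/(\gamma-1)}$ boundedness of all physical variables; since the limit point is non-vacuum, $\lim\Omega=\Omega_{e}>0$ gives matter domination and $\lim\dot U=0$, $\lim\dot U^{a}_{\ ;a}=0$ give asymptotic spatial homogeneity. \textbf{The main obstacle} I anticipate is the boundary-exclusion step: I must verify carefully that $G$ blows up (or vanishes) with the correct sign on \emph{all} of the candidate boundary faces other than $S=1$, so that no trajectory can limit onto the vacuum or unphysical sets or escape along $P$; this requires checking the exponents $\frac{81}{40}Q_{2}(\gamma)$, $\frac{81}{40}(2-\gamma)(10-7\gamma)$, and $\frac{81}{5}\gamma(3-2\gamma)$ in (\ref{functionG}) against the two separate $\gamma$-ranges, and reconciling the two opposite senses of monotonicity that occur for $\gamma<\frac{10}{9}$ versus $\gamma>\frac{4}{3}$ so that the same conclusion—asymptotic to $\Collins$—emerges in both regimes.
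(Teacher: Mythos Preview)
Your framework is correct and matches the paper's: use the monotone function $G$ on the positively/negatively invariant sets $Z_{\pm}$, invoke Proposition~\ref{PropExtendedMon}, and finish with the physical interpretation of Corollary~\ref{CorollaryExistenceOfCosmodels}. However, there is a genuine gap in the boundary-exclusion step, precisely where you flagged it.

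Your claim that the behaviour of $G$ on the boundary rules out every face except $S=1$ is not sustainable. Examine the exponents in (\ref{functionG}) for $1<\gamma<\frac{10}{9}$: as $Y_{4}\to 0$ with $S>0$ one has $G\to 0$, and as $S\to 0$ with $Y_{4}>0$ one has $G\to\infty$. Since $G$ is \emph{decreasing} on $Z_{+}$ in this range, the Monotonicity Principle only excludes the set where $\lim G = b = \infty$; the vacuum face $Y_{4}=0$ is \emph{not} excluded this way. More seriously, the equilibrium point $\INF^{+}$ sits at the corner $S=0$, $Y_{4}=0$, where $G$ has an indeterminate $0/0$ form and no well-defined boundary limit; the Monotonicity Principle therefore cannot eliminate $\INF^{+}$ as an $\omega$-limit. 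The paper does not try to do so. Instead it applies the extended Poincar\'e--Bendixson theorem on the two-dimensional boundary to reduce the possible $\omega$-limits to the two equilibrium points $\Collins^{-}$ and $\INF^{+}$ (and, for the $\alpha$-limit, $\Collins^{+}$ and $\INF^{-}$), and then invokes the asymptotic analysis of the appendix, showing that the fluid acceleration $\dot u_{1}\approx e^{-2\alpha\chi}$ diverges along any trajectory approaching $\INF^{+}$, to conclude that only the orbits asymptotic to $\Collins^{\pm}$ correspond to well-behaved models. You will need both of these ingredients---the Poincar\'e--Bendixson reduction and the appendix blow-up argument---to close the proof; the monotone function alone cannot carry the full weight.
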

\begin{proof}
Proposition \ref{PropExistenceofOpenSetofTraject} and Corollary \ref{CorollaryExistenceOfCosmodels} have demonstrated the existence of the models claimed in Proposition \ref{TheoremAsymptoticToCollins}. We need to demonstrate that there are no other possible behaviours for well-behaved models. To do this we make use of the monotone function $G$ on the open sets $Z_{+}$ and $Z_{-}$ together with Proposition \ref{PropExtendedMon}.\\
For $1 < \gamma < \frac{10}{9}$ the function $G$ is monotone decreasing in the positive invariant set $Z_{+}$. If $\mathbf{x} \in Z_{+}$ then $\omega(\mathbf{x}) \subset \overline{Z}_{+} \setminus Z_{+}$, by Proposition \ref{PropExtendedMon}. We now apply the extended Poincar\'e--Bendixson theorem \cite[p.709]{hewitt1991investigation} to conclude that trajectories tend to either $\mathrm{C}^{-}$ or $\mathrm{INF}^{+}$ as $\chi \longrightarrow \infty$.\\
If we consider $\alpha(\mathbf{x})$ then Proposition \ref{PropExtendedMon} shows that $\alpha(\mathbf{x}) \not\subseteq Z_{+}$ and then either $\alpha(\mathbf{x}) \subset \overline{Z}_{+} \setminus Z_{+}$ or $\alpha(\mathbf{x}) \subset \overline{Z}^{c}_{+}$. The first possibility can be eliminated as there are no sources on the boundary of $Z_{+}$. We conclude that if we follow the trajectory through $\mathbf{x}$ backwards along the curve (i.e. decreasing $\chi$) then it came from $Z_{-}$. Since $Z_{-}$ is negatively invariant then Proposition \ref{PropExtendedMon} yields $\alpha(\mathbf{x}) \subset \overline{Z}_{-} \setminus Z_{-}$ and by the extended Poincar\'e--Bendixson theorem \cite[p.709]{hewitt1991investigation} either $\alpha(\mathbf{x}) = \mathrm{C}^{+}$ or $\alpha(\mathbf{x}) = \INF^{+}$.\\
We prove in the appendix that any trajectory in the phase space that tends to $\INF^{+}$ and corresponds to a perfect fluid model has the property that the matter energy density, $\mu$, diverges on a spatial slice, $t=t_{0}>0$. Thus we conclude that the only well-behaved models are the ones corresponding to trajectories tending to $\mathrm{C}^{\pm}$ at large spatial distance. \\
The proof is similar for the case of $\frac{4}{3} < \gamma < \frac{10}{7}$, with the roles of $Z_{+}$ and $Z_{-}$ switching.
\end{proof}
\begin{figure}[h!]
	\caption{Illustration of a trajectory connecting $\Collins^{+}$ to $\Collins^{-}$.}
	\centering
	\includegraphics[width=0.9\textwidth,height=0.3\textheight]{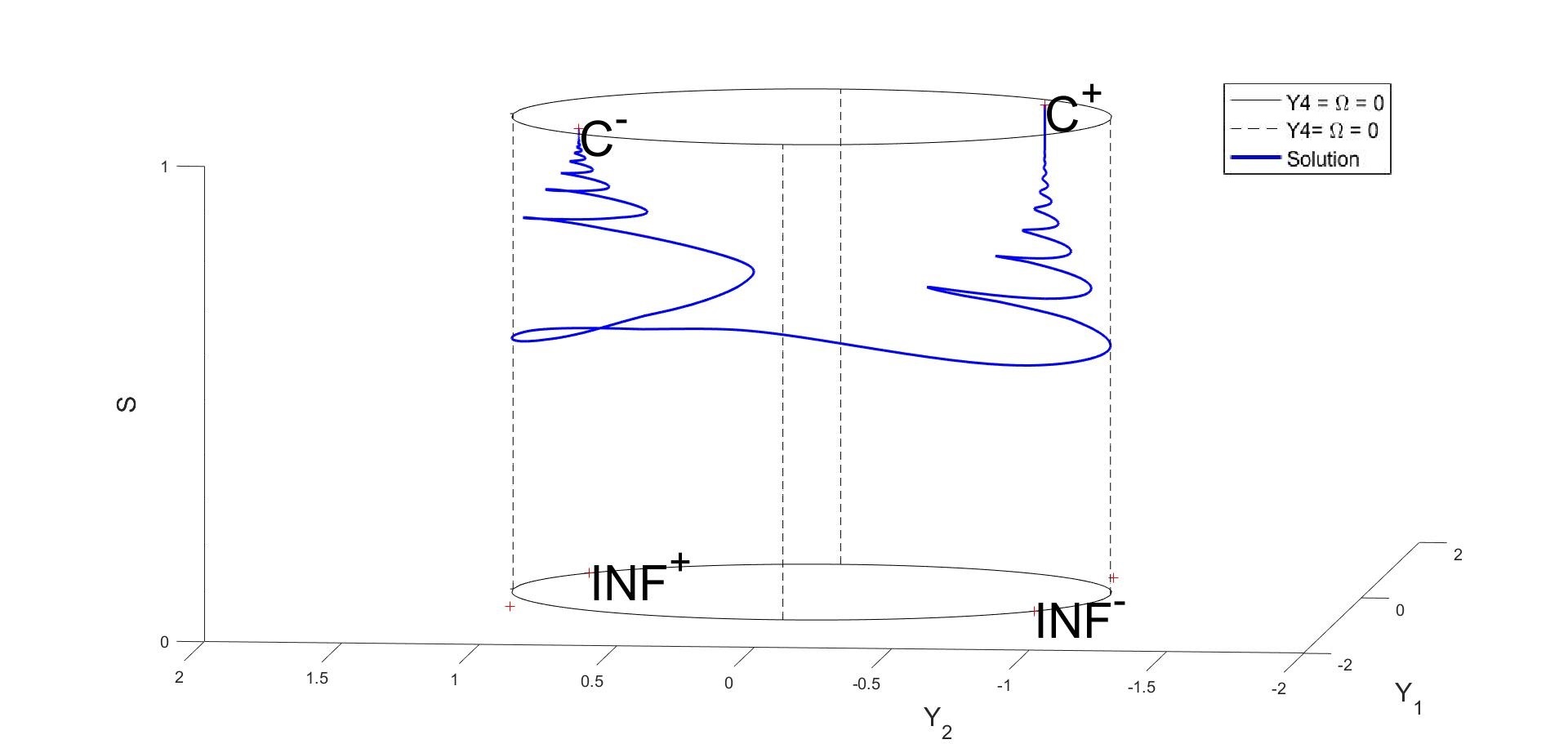}
	\label{Figure3DCollinsCollins}
\end{figure}
\begin{figure}[h!]
	\caption{ $\Omega$ vs. $\chi$ for the solution in Fig \ref{Figure3DCollinsCollins}.}
	\centering
	\includegraphics[width=0.6\textwidth,height=0.3\textheight]{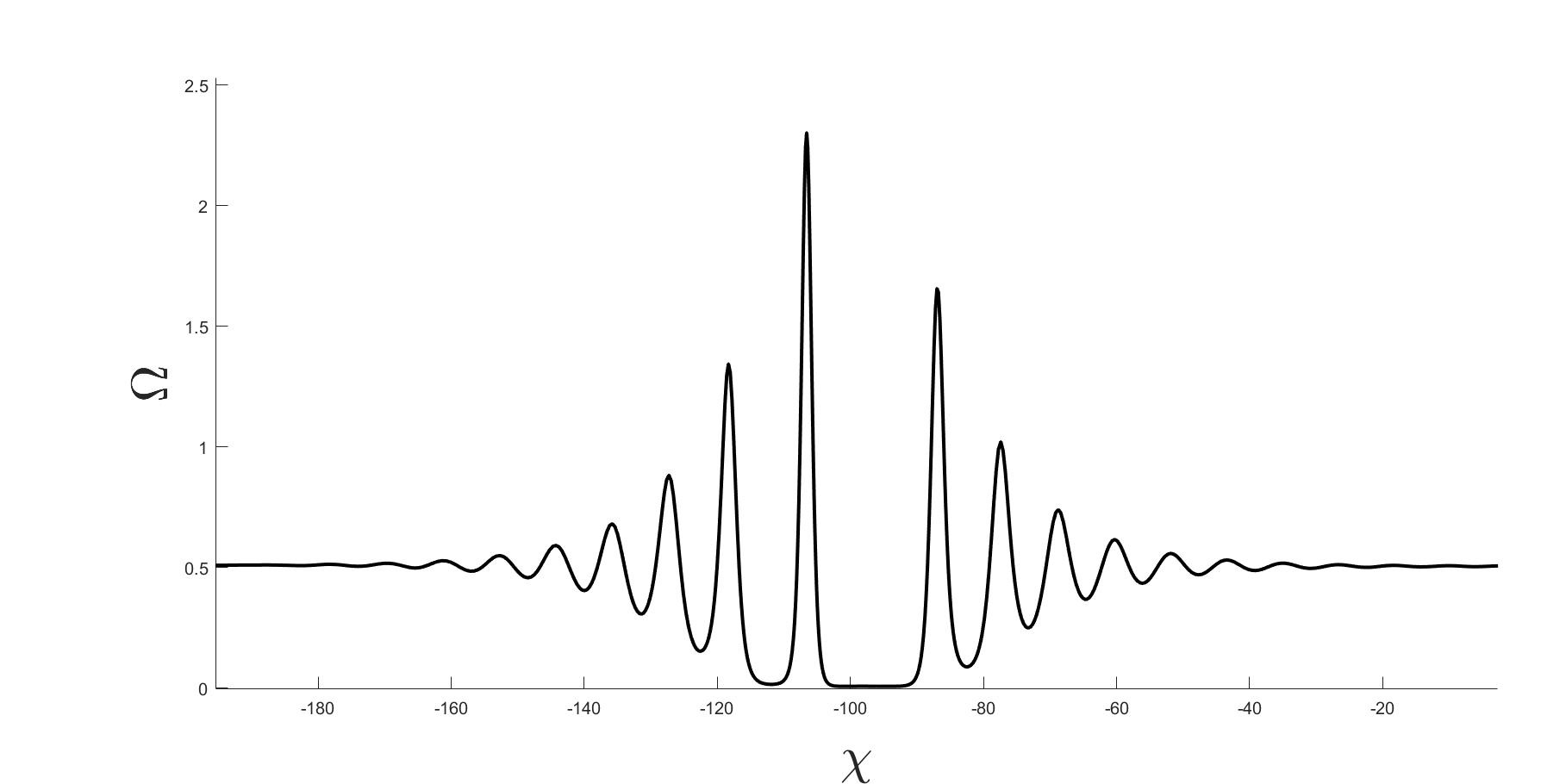}
	\label{FigureOmegaVSChi}
\end{figure}
A typical solution of this type is given in Fig \ref{Figure3DCollinsCollins}. The behaviour of the dimensionless energy density $\Omega$ in terms of the spatial parameter $\chi$ is of our interest. Fig \ref{FigureOmegaVSChi} illustrates $\Omega(\chi)$ for the solution in Fig \ref{Figure3DCollinsCollins}. 
\begin{prop}\label{TheoremAsymptoticToLK}
	If $\frac{10}{7} \leq \gamma < \frac{3}{2}$ then the only well-behaved perfect fluid cosmological models in this class are asymptotic to the LK model at large spatial distance. The models are vacuum dominated, in that $\lim\limits_{\chi \to \pm \infty} \Omega = 0$. If $\gamma = \frac{10}{7}$ then the models are asymptotically spatially homogeneous with $\lim\limits_{\chi \to \pm \infty} \dot{U} = 0 = \lim\limits_{\chi \to \pm \infty} \dot{U}^{a}_{\ ; a}$. If $\frac{10}{7} < \gamma < \frac{3}{2}$ then the models are acceleration dominated with $\lim\limits_{\chi \to \pm \infty} \dot{U} \neq 0$,  $\lim\limits_{\chi \to \pm \infty} \dot{U}^{a}_{\ ; a} \neq 0$.
\end{prop}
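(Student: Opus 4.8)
The plan is to mirror the architecture of the proof of Proposition~\ref{TheoremAsymptoticToCollins}, adapted to the range $\frac{10}{7}\le\gamma<\frac{3}{2}$ in which $\Collins$ has left the physical phase space and $\LK$ inherits its role as the attracting/repelling state. Existence of the advertised models is already in hand: for $\frac{10}{7}<\gamma<\frac{3}{2}$, Proposition~\ref{PropExistenceofOpenSetofTraject}(c) supplies an open set of orbits running from $\LK^{+}$ to $\LK^{-}$, while Corollary~\ref{CorollaryExistenceOfCosmodels}(b),(c) certifies that the corresponding models are well-behaved, vacuum dominated, and either asymptotically spatially homogeneous (at $\gamma=\frac{10}{7}$) or acceleration dominated (for $\gamma>\frac{10}{7}$). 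What remains is the \emph{uniqueness} half of the statement, namely that no well-behaved model can be asymptotic to anything other than $\LK$, and for this I would run the monotone function $G$ of~(\ref{functionG}) on the sets $Z_{+}$ and $Z_{-}$ together with the extended Poincar\'e--Bendixson theorem.

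First I would pin down the direction of $G$. Because $\frac{10}{7}>\frac{10}{9}$, the factor $10-9\gamma$ in~(\ref{derivativeofG}) is negative across the whole range, so that $\frac{dG}{d\chi}=K\,Y_{2}\,G$ with $K=\frac{81}{20}(9\gamma-10)(2-\gamma)^{2}>0$ and $G>0$ in the interior; hence $G$ is strictly increasing on $Z_{+}$ and strictly decreasing on $Z_{-}$. Since the range sits inside $[\frac{4}{3},\frac{3}{2})$, the membrane analysis gives that $Z_{+}$ is negatively invariant and $Z_{-}$ is positively invariant. Strict monotonicity and Proposition~\ref{PropExtendedMon} then force $\omega(\mathbf{x})\subset\overline{Z}_{-}\setminus Z_{-}$ for $\mathbf{x}\in Z_{-}$, and $\alpha(\mathbf{x})\subset\overline{Z}_{+}\setminus Z_{+}$ for $\mathbf{x}\in Z_{+}$, ruling out any interior limit set and, since the flow across $P$ is one-way, any interior limit cycle. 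On the two-dimensional boundary pieces $S=1$ and $Y_{4}=0$ the flow is known---from \cite{hewitt1988qualitativeI} on $S=1$, and from Section~4 (Table~\ref{TableofSignofTheEigenvaluesVacuumBoundaryPlus}) on the vacuum boundary---and its only sinks and sources are $\LK^{-},\INF^{+}$ and $\LK^{+},\INF^{-}$ respectively, with $\RT$ a saddle. Applying the extended Poincar\'e--Bendixson theorem \cite[p.709]{hewitt1991investigation} therefore yields $\omega(\mathbf{x})\in\{\LK^{-},\INF^{+}\}$ and $\alpha(\mathbf{x})\in\{\LK^{+},\INF^{-}\}$. The points $\INF^{\pm}$ are eliminated exactly as in Proposition~\ref{TheoremAsymptoticToCollins}, by the appendix result that any fluid orbit tending to $\INF$ has matter density $\mu$ diverging on a spatial slice and is hence not well-behaved. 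This leaves $\LK^{+}\to\LK^{-}$ as the only possibility, and the vacuum-dominated, and (for $\gamma>\frac{10}{7}$) acceleration-dominated, characterizations follow from Corollary~\ref{CorollaryExistenceOfCosmodels}.

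The delicate part, and the step I expect to be the main obstacle, is controlling the behaviour of $G$ at the boundary equilibria and confirming that $\RT$ never attracts a physical orbit. At both $\INF$ and $\LK$ the defining factors of $G$ compete: each of these points lies on $Y_{4}=0$, so $Y_{4}^{\frac{81}{40}Q_{2}(\gamma)}\to0$, while simultaneously $S\to0$ at $\INF$ (giving $S^{-\frac{81}{5}\gamma(3-2\gamma)}\to\infty$) and $1-S^{2}\to0$ at $\LK$ with the now-negative exponent $\frac{81}{40}(2-\gamma)(10-7\gamma)$ (giving a factor tending to $\infty$). Consequently $\lim_{\mathbf{y}\to\mathbf{s}}G(\mathbf{y})$ is genuinely path-dependent at $\INF$ and at $\LK$, so the sharp ``$\lim G\neq b$'' refinement of Proposition~\ref{PropExtendedMon} cannot be used to separate these points; this is precisely why the argument must lean on the extended Poincar\'e--Bendixson theorem and the appendix rather than on $G$ alone. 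To dispose of $\RT$ I would verify that its eigenvalue transverse to the invariant set $Y_{4}=0$ is repelling into the physical interior $Y_{4}>0$, so that the stable and unstable manifolds of the saddle $\RT$ lie entirely within the vacuum boundary; no interior orbit then has $\RT$ in its $\omega$- or $\alpha$-limit, and the exceptional orbits on these manifolds are, being backward- or forward-asymptotic to $\INF$, already excluded by the appendix.

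Finally the endpoint $\gamma=\frac{10}{7}$ must be handled on its own, since the existence statement of Proposition~\ref{PropExistenceofOpenSetofTraject}(c) is stated for the open range only. Here the exponent $(2-\gamma)(10-7\gamma)$ vanishes, so the $(1-S^{2})$ factor drops out of $G$ and $G=Y_{4}^{\frac{81}{40}Q_{2}(\gamma)}\,S^{-\frac{81}{5}\gamma(3-2\gamma)}$ is monotone with $G\to0$ cleanly at both $\LK$ and $\RT$; moreover $\Collins$ has coalesced with $\LK$. Existence at this value is provided by Corollary~\ref{CorollaryExistenceOfCosmodels}(b), and the same monotonicity-plus-Poincar\'e--Bendixson argument, together with the appendix exclusion of $\INF$, gives that the only well-behaved models are asymptotic to $\LK$. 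Because $\LK$ now coincides with the spatially homogeneous $\Collins$ state, these models are asymptotically spatially homogeneous with $\lim_{\chi\to\pm\infty}\dot U=0=\lim_{\chi\to\pm\infty}\dot U^{a}_{\ ;a}$ and vacuum dominated, $\lim_{\chi\to\pm\infty}\Omega=0$, as recorded in the statement.
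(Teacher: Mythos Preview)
Your plan is exactly the paper's: its proof consists of two sentences, citing Proposition~\ref{PropExistenceofOpenSetofTraject} and Corollary~\ref{CorollaryExistenceOfCosmodels} for existence and declaring that ``the rest of the proof follows the proof of Proposition~\ref{TheoremAsymptoticToCollins}'', and you have faithfully unpacked that second sentence, with the correct reading of the sign of $10-9\gamma$ in~(\ref{derivativeofG}), the correct invariance of $Z_{\pm}$ for $\gamma\ge\frac{4}{3}$, the separate treatment of $\gamma=\frac{10}{7}$, and the appendix-based exclusion of $\INF$.

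One detail you add is incorrect and worth repairing. The eigenvalue at $\RT^{+}$ transverse to the vacuum boundary is $\lambda_{1}=8(2-\gamma)\sqrt{(\gamma-1)(3-2\gamma)}\,(10-9\gamma)/\sqrt{Q_{1}(\gamma)}$, which is \emph{negative} for $\gamma>\frac{10}{9}$; the stable manifold of $\RT^{+}$ therefore does enter the interior $Y_{4}>0$, and in any case a saddle with a nonzero transverse eigenvalue cannot have both invariant manifolds confined to the vacuum boundary. What actually removes $\RT$ from the limit-set analysis is the $Z_{\pm}$ partition you already invoke: for $\gamma>\frac{4}{3}$ one computes $Y_{2}(\RT^{+})>0$, so $\RT^{+}\in\bar Z_{+}\setminus\bar Z_{-}$ and is automatically excluded from $\omega(\mathbf{x})$ for $\mathbf{x}\in Z_{-}$; while $\RT^{-}\in\bar Z_{-}$ has a one-dimensional stable manifold lying in $Y_{4}=0$ (its transverse eigenvalue is positive), so no interior orbit has it as $\omega$-limit. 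The symmetric statements handle the $\alpha$-limits. This still leaves the codimension-one family of interior orbits on the stable manifold of $\RT^{+}$, which stay in $Z_{+}$ for all forward $\chi$ and never enter $Z_{-}$; the paper's brief proof does not isolate this set either.
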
 
\begin{proof}
Proposition \ref{PropExistenceofOpenSetofTraject} and Corollary \ref{CorollaryExistenceOfCosmodels} demonstrate the existence of such models. The rest of the proof follows the proof of Proposition \ref{TheoremAsymptoticToCollins}.
\end{proof}
A typical solution of this type is given in Fig \ref{Figure3DLKLK}. The dimensionless energy density, $\Omega(\chi)$, of the solution in Fig \ref{Figure3DLKLK} is given in Fig \ref{FigureOMEgaLKLK}.
\begin{figure}[h!]
	\caption{Illustration of a trajectory connecting $\mathrm{LK}^{+}$ to $\mathrm{LK}^{-}$.}
	\centering
	\includegraphics[width=0.75\textwidth,height=0.3\textheight]{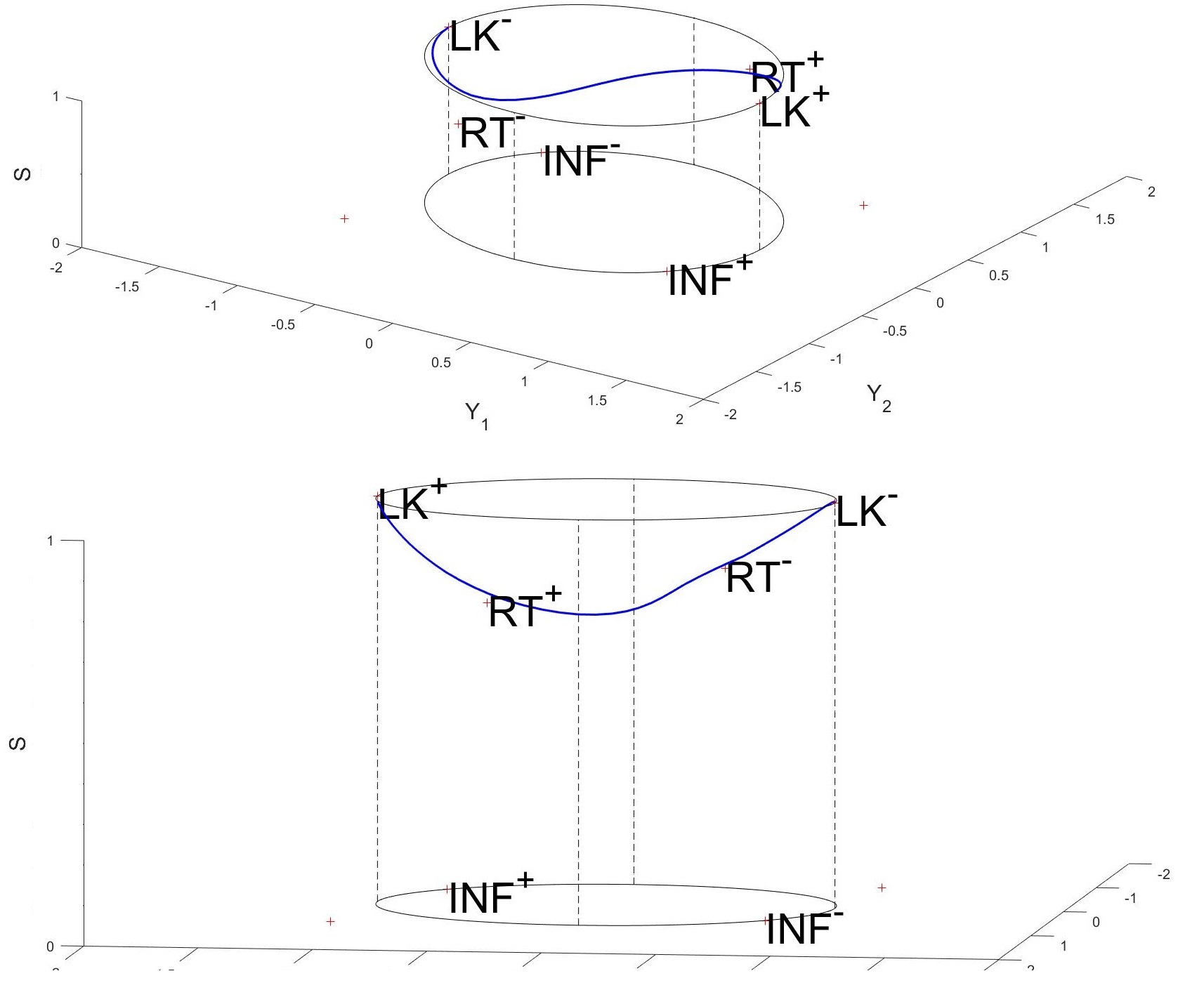}
	\label{Figure3DLKLK}
\end{figure}
 \\
\begin{figure}[h!]
	\caption{$\Omega$ vs. $\chi$,  $\mathrm{LK}^{+}$ to $\mathrm{LK}^{-}$.}
	\centering
	\includegraphics[width=0.8\textwidth]{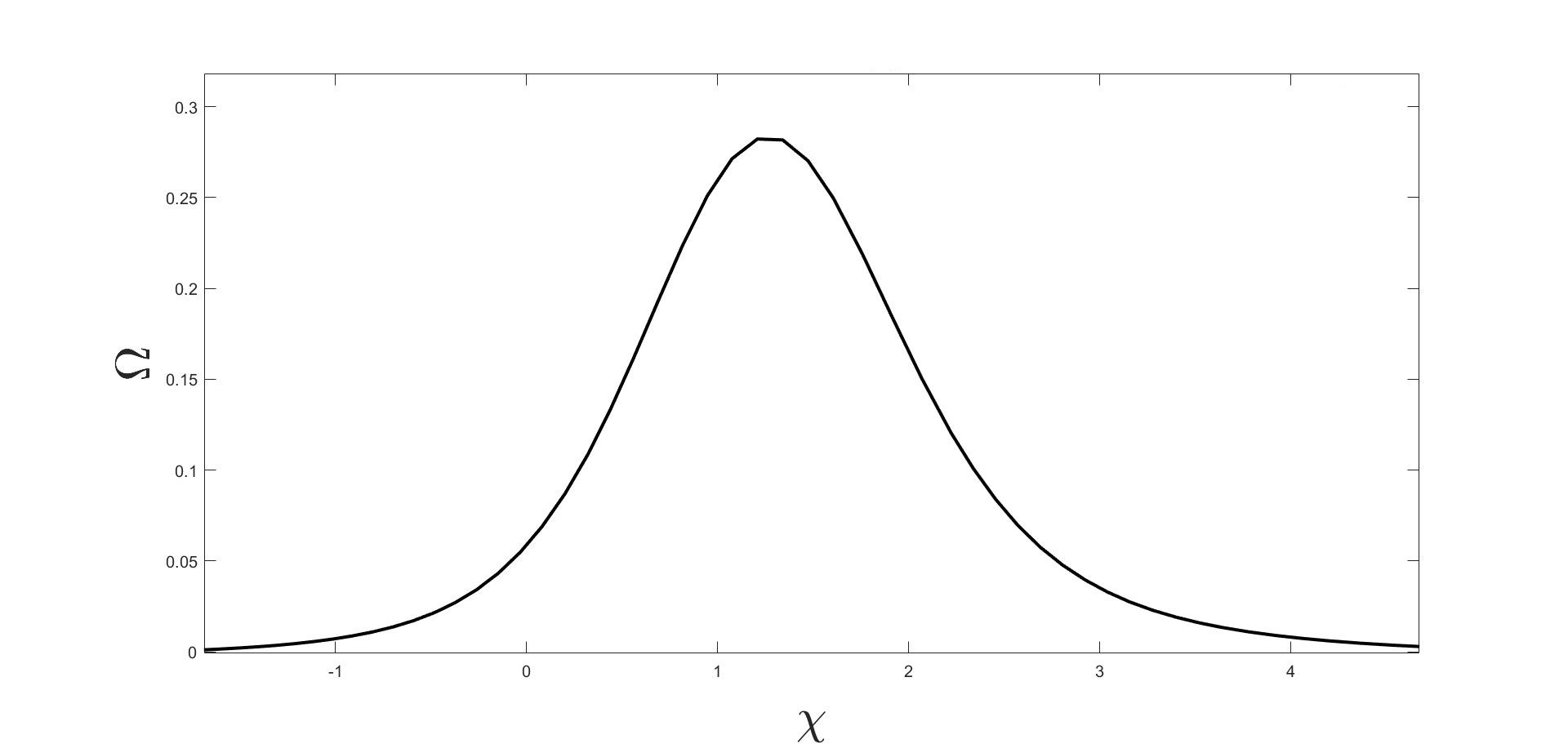}
	\label{FigureOMEgaLKLK}
\end{figure}
We consider the special value of $\gamma$, $\gamma = \frac{10}{9}$, in the next section. We conclude this section with a brief discussion about the case of $\gamma = \frac{6}{5}$.\\
When $\gamma = \frac{6}{5}$, the trajectories corresponding to perfect fluid models in the invariant set $S=1$ are closed curves. To the best of the authors' knowledge these models were the first exact solutions of the EFEs to be found in which the spatial inhomogeneity is periodic ( see \cite{hewitt1988qualitativeI} ).\\
\begin{figure}[h!]
	\begin{center}
    \centering
	\caption{The invariant set $S=1$, $Y_{2} \geq 0 $, when $\gamma = \frac{6}{5}$. The phase portraits consist of closed curves and a cycle graph on the vacuum boundary which asymptotically tends to $\mathrm{LK}^{+}$ and $\mathrm{LK}^{-}$ equilibrium points as illustrated in \cite[p.1320]{hewitt1988qualitativeI}. The blue surface represents a larger value of $c$ compared to the gray surfaces. }
	\centering
	\includegraphics[width=0.75\textwidth]{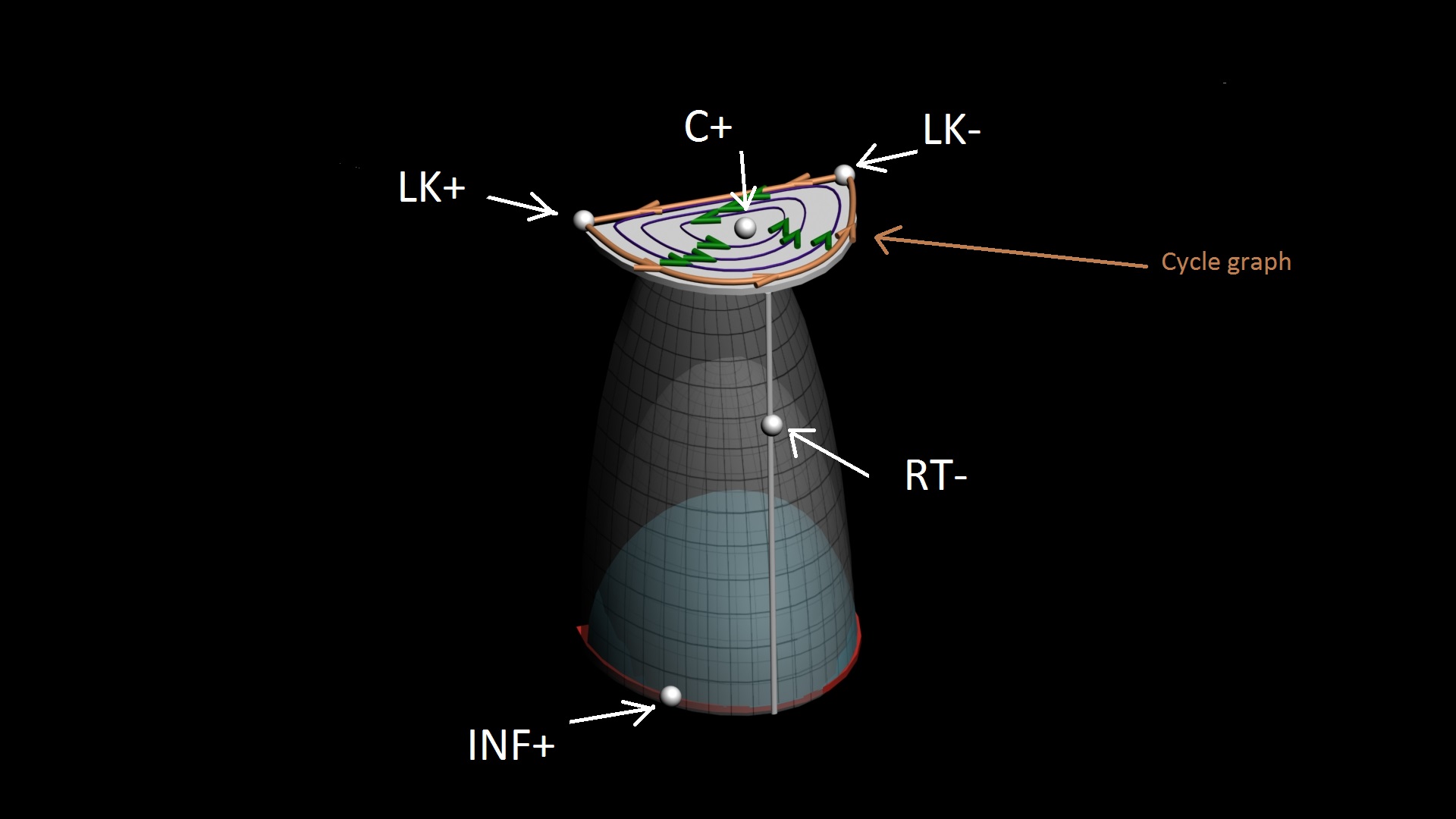}
	\label{DiagramDescribingMonotoneincreasing function}
	\end{center}
\end{figure}
The function $G$ is monotone increasing in the positively invariant set $Z_{+}$ and then trajectories are asymptotic to the boundary of $Z_{+}$ as $\chi \longrightarrow +\infty$. Following the trajectories into the past eventually forces us into $Z_{-}$, in which $G$ is monotone decreasing, and are thus asymptotic to the boundary of $Z_{-}$ as $\chi \longrightarrow - \infty$. Applying the extended Poincar\'e--Bendixson theorem \cite[p.709]{hewitt1991investigation} leads to the possibilities presented in Proposition \ref{Propgamma65}.
\begin{prop}\label{Propgamma65}
If $\gamma = \frac{6}{5}$, and if there exist trajectories that correspond to well-behaved perfect fluid models, then these trajectories are asymptotic to one of the following at large spatial distance as $\chi \longrightarrow +\infty$ and as $\chi \longrightarrow - \infty$.
\begin{enumerate}[label=(\alph*)]
	\item The C equilibrium point.
	\item One of the closed curves in $S=1$.
	\item The cycle graph in $S=1$ connecting the two LK equilibrium points.
\end{enumerate}
\end{prop}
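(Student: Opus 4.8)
The plan is to use the monotone function $G$ of (\ref{functionG}) to show that, when $\gamma=\frac{6}{5}$, every well-behaved perfect-fluid orbit must in fact lie in the two-dimensional invariant set $S=1$, and then to read off the three possibilities (a)--(c) by applying the extended Poincar\'e--Bendixson theorem inside that invariant set. From the preceding discussion I take as given that $Z_{+}$ is positively invariant, $Z_{-}$ is negatively invariant, and the plane $P=\{Y_{2}=0\}$ is a one-way membrane carrying the flow from $Z_{-}$ into $Z_{+}$; moreover, since $10-9\gamma=-\frac{4}{5}<0$ at $\gamma=\frac{6}{5}$, the derivative formula (\ref{derivativeofG}) shows that $G$ is strictly increasing along orbits in $Z_{+}$ and strictly decreasing along orbits in $Z_{-}$.

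The central step is to confine well-behaved orbits to $S=1$. Well-behavedness forces the shear to be bounded, and since $S\to 0\iff\Sigma_{13}\to\infty$, such an orbit stays in a region $S\ge\varepsilon>0$ and avoids the face $S=0$ (in particular $\INF$); being bounded, its $\alpha$- and $\omega$-limit sets are non-empty, compact and invariant. Suppose such an orbit entered the open region $0<S<1$. If it lies in $Z_{+}$ at some time then, by positive invariance, it remains in $Z_{+}$ thereafter; there $G$ is strictly increasing and bounded, hence converges to a positive limit $L$, so by Proposition~\ref{PropExtendedMon} its $\omega$-limit set lies on $\partial Z_{+}\cap\{G=L\}$. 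But the only faces of $\partial Z_{+}$ meeting $S\ge\varepsilon$ are $S=1$ and $Y_{4}=0$, on both of which $G\to0\ne L$, together with the membrane $Y_{2}=0$, which is crossed transversally for $0<S<1$ (indeed $\frac{dY_{2}}{d\chi}\big|_{Y_{2}=0}=\frac{8}{25}(1-S^{2})>0$ there) and so supports no invariant set; thus the $\omega$-limit set would be empty, a contradiction. The mirror argument, using the negative invariance of $Z_{-}$ and the fact that $G$ increases into the past there, excludes an orbit lying in $Z_{-}$. Since $P$ itself is not invariant, no well-behaved orbit can remain in the open interior, and we conclude that every well-behaved perfect-fluid orbit lies in $S=1$, as do its $\alpha$- and $\omega$-limit sets.

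It then remains to classify the limit sets of an orbit inside the invariant set $S=1$, where the flow is two-dimensional and $Y_{4}>0$ fills the interior of an ellipse bounded by the vacuum circle $Y_{4}=0$. At $\gamma=\frac{6}{5}$ one has $6-5\gamma=0$, so $\Collins^{+}$ and $\Collins^{-}$ coincide on the axis $Y_{1}=0$, and Table~\ref{TableofEigenvaluesPlusMinus} shows that the transverse eigenvalue $\lambda_{1}$ is non-zero while $\lambda_{2},\lambda_{3}$ become purely imaginary; hence $\Collins$ is a centre within $S=1$ and the perfect-fluid orbits there are the closed curves of \cite{hewitt1988qualitativeI}. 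The only equilibria of the $S=1$ flow are this central point $\Collins$ and the two points $\LK^{\pm}$ on the vacuum circle. Applying the extended Poincar\'e--Bendixson theorem \cite[p.709]{hewitt1991investigation}, the $\alpha$- and $\omega$-limit set of any such orbit can only be the equilibrium $\Collins$, one of the surrounding closed curves, or the bounding heteroclinic cycle graph joining $\LK^{+}$ to $\LK^{-}$ along the vacuum circle. These are exactly the options (a), (b) and (c).

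I expect the main obstacle to be making the confinement step of the second paragraph fully rigorous: one must combine the strict monotonicity of $G$, the one-way membrane property of $P$, and the bound $S\ge\varepsilon$ so as to exclude every boundary face simultaneously, the only delicate point being that the putative limit set cannot hide on the transversal membrane $Y_{2}=0$ (an invariant set cannot lie in a transversal section) nor escape through the degenerate corner $\INF$ or the edge $S=1$, where $P$ becomes tangent ($\frac{dY_{2}}{d\chi}\big|_{Y_{2}=0}=0$ at $S=1$). A secondary difficulty is that the flow on $S=1$ is non-hyperbolic of centre type at $\gamma=\frac{6}{5}$, so the concluding Poincar\'e--Bendixson step must be carried out on the genuine two-dimensional invariant manifold rather than through linearisation, using the global picture of \cite{hewitt1988qualitativeI} to confirm that the closed curves form a nest around $\Collins$ bounded by the cycle graph through $\LK^{\pm}$.
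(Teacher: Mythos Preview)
Your route is genuinely different from the paper's, and in fact sharper. The paper's proof is a one-line pointer back to Proposition~\ref{TheoremAsymptoticToCollins}: use the monotonicity of $G$ only to push the $\omega$- and $\alpha$-limit sets onto $\bar Z_\pm\setminus Z_\pm$, apply the extended Poincar\'e--Bendixson theorem on those two-dimensional boundary pieces, and discard $\INF$ via the appendix; options (a)--(c) are simply what survives on the $S=1$ face. You instead exploit that along a well-behaved interior orbit $G$ must tend to a \emph{strictly positive} limit $L$ (it starts positive, is monotone on each half-space, and $S\ge\varepsilon$ keeps it bounded), which is incompatible with the $\omega$-limit lying on $S=1$ or $Y_4=0$ (where $G=0$), on the transversal membrane, or on the excluded face $S=0$. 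This forces well-behaved orbits to lie \emph{in} $S=1$, not merely to be asymptotic to it---so your argument, once the case split (eventually in $Z_+$ versus forever in $Z_-$) is written out in full, actually resolves the question the paper explicitly leaves open just after the proposition: there are no well-behaved N-OT models ($\Sigma_{13}\neq0$) for $\gamma=\tfrac{6}{5}$ at all, and the proposition is vacuous for genuine interior orbits.

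One caution on a step you flag yourself: ``bounded shear'' does not quite justify $S\ge\varepsilon$, since the physical shear is $\sigma_{13}=\theta\,\Sigma_{13}$ and $\theta\to0$ near $\INF$, so $\Sigma_{13}\to\infty$ alone need not force unbounded shear. The correct argument routes through the appendix: any orbit whose limit set touches $S=0$ must accumulate at $\INF^\pm$, where the \emph{acceleration} diverges, and that is what excludes $S\to0$ for well-behaved models.
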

\begin{proof}
	The proof of the result is analogous to the proof of Proposition \ref{TheoremAsymptoticToCollins}.
\end{proof}
Note that in this case we are unable to demonstrate, analytically, the existence of any trajectories corresponding to well-behaved models.\\

\section{Perfect Fluid Models With $\gamma = \frac{10}{9}$ }
If the equation of state parameter, $\gamma$, is equal to $\frac{10}{9}$, then the function $G$ defined in (\ref{functionG}) has a derivative of zero. Thus, the DE has a first integral, and all trajectories lie on a surface $G =$constant, that is
\begin{eqnarray}
	\frac{Y^{3}_{4}(1 - S^{2})^{4}}{S^{14}} = \text{constant}. \ ( \geq 0 ) 
\end{eqnarray}
The surface $G=0$ is the union of the two surfaces that have already been considered, namely the vacuum boundary $Y_{4}=0$, and the "top" of the phase space, the invariant set $S=1$. A phase portrait lying on this union of two surfaces is given in Fig \ref{FigureBifurcationGSteps1}. The surface $G = $"$\infty$" is the "bottom" of phase space, the invariant $S=0$ which corresponds to trajectories at infinity. A phase portrait lying on this surface is given in Fig \ref{FigureBifurcationGSteps5}. Surfaces of the form $G=c$, $c \in \Reals^{+}$ have been briefly discussed in the previous section and illustrated in Fig \ref{DiagramDescribingMonotoneGeneralfunction}. Note that all of these surfaces intersect on the curve $S = 0$, $Y_{4}=0$.\\
Furthermore, when the equation of state parameter, $\gamma$, is equal to $\frac{10}{9}$, the DE admits the one parameter family of equilibrium points, labeled W\footnote{The exact transitively self-similar cosmological models which corresponds to these equilibrium points was first given by Wainwright in \cite{hsu1986self}.}. These equilibrium points form an ellipse in phase space, however we are only concerned with the two segments of this ellipse that lie in the physical region of phase space. Each one of these segments connects one of the vacuum RT equilibrium points to one of the C equilibrium points and enables an exchange of stability between these two disconnected equilibrium points. These segments are illustrated in Fig \ref{CurvesOfWainwrightEquiPoints}.
\begin{figure}[h!]
	\caption{Curves of Wainwright equilibrium points connecting RT and C equilibrium points.}
	\centering
	\includegraphics[width=0.75\textwidth]{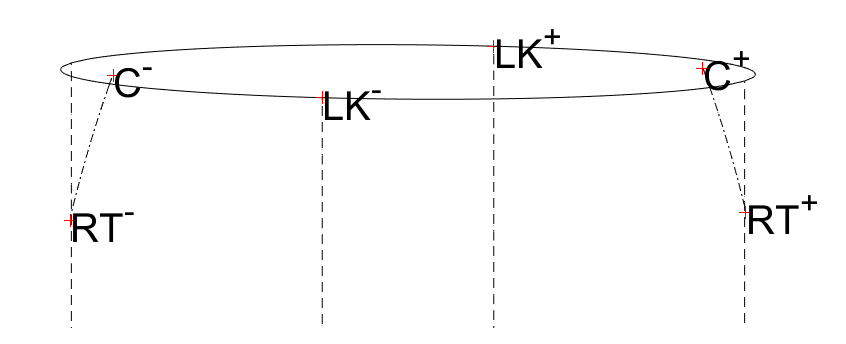}
	\label{CurvesOfWainwrightEquiPoints}
\end{figure}
This type of bifurcation has been identified in numerous cosmological dynamical systems and was first detected in \cite{hewitt1991investigation}, in which this identical curve of equilibrium points is responsible for a change of stability between the identical terminal equilibrium points (namely RT and C). However, in that example the effect concerns the asymptotic state at large times rather than at large distance, as is the case here. We refer to this bifurcation as a line bifurcation. It is an unusual bifurcation because it allows a transfer of stability between two equilibrium points in phase space that never come in contact with each other.\\
The surfaces $G=c$ intersect the two segments of W equilibrium points in either zero, two, or four equilibrium points depending on whether $c>c^{*}$, $c=c^{*}$ or $ 0 < c < c^{*}$, where $c^{*} = \frac{2^{8} 5^{7}}{7^{11}3^3} \approx 3.75 \times 10^{-4}$, respectively.\\
When there are two intersections the equilibrium points are both saddle-nodes. When there are four intersections, then two of the equilibrium points are saddle points and the other two are either foci, $0 < c <c^{**}$, or nodes, $c^{**} \leq c \leq c^{*}$, where $c^{**} = \frac{2^{8} 17^{4} 41^{3}}{13^{14}} \approx 3.74 \times 10^{-4}$.\\
There are four qualitatively distinct types of phase portraits and they are illustrated in Fig \ref{FigureBifurcationGSteps2}-\ref{FigureBifurcationGSteps4}. \\
In order to be able to fully understand how the surfaces $G=c$ mesh together, we also provide the phase portraits on $G=0$, in Fig \ref{FigureBifurcationGSteps1} and on $G="\infty"$, in Fig \ref{FigureBifurcationGSteps5}.\\
\begin{lemma}\label{LemmaExistenceGamma19}
	If $\gamma = \frac{10}{9}$ then there exists trajectories which are both asymptotic to a W equilibrium point as $\chi \longrightarrow \pm \infty$.
\end{lemma}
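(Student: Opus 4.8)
The plan is to exploit the first integral. When $\gamma=\frac{10}{9}$ the coefficient $10-9\gamma$ in (\ref{derivativeofG}) vanishes, so $\frac{dG}{d\chi}=0$ and $G$ is constant along orbits; every trajectory is therefore confined to one of the invariant two-surfaces $\{G=c\}$, on which the flow is effectively two dimensional. I would fix a value $0<c<c^{*}$, for which $\{G=c\}$ meets the two W-segments in four equilibrium points, and carry out the entire argument inside this surface, using the explicit description of the $W$-curve already recorded in Fig \ref{CurvesOfWainwrightEquiPoints}.

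Next I would settle the local character of these four W points on $\{G=c\}$. The eigenvalue $\lambda_{1}=0$ in the Wainwright row of Table \ref{TableofEigenvaluesPlusMinus} is tangent to the curve of equilibria, and away from the $G$-maximum this direction is transverse to $\{G=c\}$, so the in-surface dynamics is governed by $\lambda_{2},\lambda_{3}$. Their product is proportional to $112-532\,Y_{1}^{2}$, which changes sign at $Y_{1}^{2}=\frac{4}{19}$, while their sum is proportional to $Y_{1}$. Since along a W-segment $Y_{1}^{2}$ runs between its value $\frac{1}{6}$ at $\Collins$ and $\frac{7}{32}$ at $\RT$, passing through $\frac{4}{19}$, each segment contributes one node/focus (near $\Collins$, where $Y_{1}^{2}<\frac{4}{19}$) and one saddle (near $\RT$). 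The reflection symmetry $(Y_{1},Y_{2},S)\mapsto(-Y_{1},-Y_{2},S)$ preserves $G$, so the node/focus with $Y_{1}>0$ is a source and its mirror image with $Y_{1}<0$ is a sink, while the two saddles are interchanged.

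Then I would produce the connecting orbit by a planar argument on $\{G=c\}$. The Stokes--Bendixson--Dulac function excludes limit cycles, so the extended Poincar\'e--Bendixson theorem \cite[p.709]{hewitt1991investigation} forces every $\omega$- and $\alpha$-limit set to be an equilibrium point or a cycle graph. The decisive structural input is the one-way membrane $P=\{Y_{2}=0\}$: because $\frac{10}{9}<\frac{6}{5}$, orbits may cross $P$ only from $Y_{2}<0$ to $Y_{2}>0$, with $Z_{-}$ negatively invariant and $Z_{+}$ positively invariant. Since the source $W$ point has $Y_{2}=-\frac{20}{9}Y_{1}<0$ and the sink $W$ point has $Y_{2}>0$, the flow is funnelled out of the source in $Z_{-}$, across $P$, and into the positively invariant region $Z_{+}$ that contains the sink. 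Tracking the outgoing orbits of the source together with the separatrices of the two saddles then isolates an open set of orbits running from the $W$-source to the $W$-sink, each of which is asymptotic to a $W$ equilibrium point as $\chi\longrightarrow\pm\infty$.

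The hard part will be this last, global step: guaranteeing that the orbits leaving the $W$-source actually arrive at the $W$-sink rather than escaping to the $\INF$ equilibria, which sit on the curve $S=0$, $Y_{4}=0$ that bounds every surface $\{G=c\}$. Excluding this escape requires combining the detailed separatrix configuration of the two $W$-saddles with the one-way membrane and the absence of cycles, and it is precisely the information recorded in the explicit phase portraits for $0<c<c^{*}$; this is where the substance of the proof lies, whereas the eigenvalue bookkeeping and the reduction to two dimensions are routine.
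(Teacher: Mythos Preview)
Your overall architecture---reduce to a two-dimensional flow on each level set $\{G=c\}$, exclude closed orbits, and then invoke Poincar\'e--Bendixson---matches the paper's. But the paper executes the global step by a different and much shorter device that you are missing. Rather than tracking separatrices of the $W$-saddles on a fixed surface $\{G=c\}$, the paper appeals to the \emph{approximation property of orbits}: on the invariant boundary $S=1$ there are already (from \cite{hewitt1988qualitativeI}) heteroclinic orbits $\Collins^{+}\!\to\Collins^{-}$; any nearby orbit with $S\neq1$ must shadow one of these, it carries a fixed value $G=c>0$, and since closed orbits have been excluded on $\{G=c\}$ the only available $\alpha$- and $\omega$-limits are $W$ equilibria. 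This sidesteps entirely the ``hard part'' you identify---ruling out escape to $\INF^{\pm}$---because orbits that shadow the $\Collins^{+}\!\to\Collins^{-}$ connection start and finish near $S=1$, far from the $\INF$ points on $S=0$. Your proposal, by contrast, attempts a direct global phase-portrait analysis and ultimately defers the crucial step to ``the explicit phase portraits'', which is not a proof.

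Two further points where the paper is more specific. First, you invoke ``the Stokes--Bendixson--Dulac function'' without producing it; the paper writes down the vector field $\mathbf{W}=S\,Y_{4}^{-1}Y_{2}^{-1}\bigl(-\frac{dY_{2}}{d\chi},\frac{dY_{1}}{d\chi},0\bigr)$ and checks that $\nabla\times\mathbf{W}\cdot d\mathbf{S}$ is single-signed on $\{G=c\}$ within each of $Z_{+}$ and $Z_{-}$ (equivalently, a Dulac function for the projected planar system). Second, the paper also rules out $W$--$W$ saddle connections, noting that the reflection symmetry forces them to occur in pairs forming a closed invariant path, which is excluded by the same Stokes--Bendixson--Dulac argument; you do not address this possibility. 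Your eigenvalue bookkeeping on the $W$-curve is correct and useful, but to close the proof you should replace the separatrix chase by the shadowing argument from $S=1$.
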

\begin{proof}
The first part of the proof requires the demonstration that there are no limit-cycles on the surfaces $G=$constant. We consider the smooth vector field $\mathbf{W}=S Y^{-1}_{4} Y^{-1}_{2}(- \frac{dY_{2}}{d \chi}, \frac{d Y_{1}}{d \chi}, 0)$ which has the property that $\nabla \times \mathbf{W} .d\mathbf{S}$ (here $d \mathbf{S}$ is the vector element of surface area) is single-signed on a surfaces $G=$constant within each invariant set $Z_{+}$ and $Z_{-}$. It follows that the hypotheses of the Stokes--Bendixson--Dulac theorem \cite{hewitt1991investigation} are satisfied and thus we conclude that there are no limit cycles in the interior of the phase space\footnote{An equivalent formulation of the non-existence of limit cycles is that if the DE is projected onto the plane $S=0$ then $B$ is a Dulac function for the implicit ($S$ is determined from $G=c$) DE in the plane.}.\\
Secondly, we make use of the approximation property of orbits.\\
There are trajectories flowing from $\mathrm{C}^{+}$ to $\mathrm{C}^{-}$ in the invariant 2-space $S=1$. It follows that in the three-dimensional phase space, there are shadowing orbits with $S\neq1$. Since there are no limit cycles, the shadowing orbits are asymptotic to W equilibrium points, as $G=$constant along the orbits, as $\chi \longrightarrow \pm \infty$. There are trajectories from RT to INF in the vacuum boundary ($Y_{4}=0$). It follows that there are shadowing orbits with $Y_{4}\neq0$. We are also guaranteed that there cannot be any saddle connections between W equilibrium points. This follows due to the fact that the symmetry in the DE only allows such saddle connections to arise in pairs, thus creating a closed connected path which would be the union of trajectories, and this is excluded by the Stokes--Bendixson--Dulac theorem.
\end{proof}
The phase portraits for the models which flow from W to W, that is for $G=c$, $0 < c < c^*$, is given in Fig \ref{FigureBifurcationGSteps2}. We remark that the possibility of any saddle connections between the W saddle points may be eliminated by using the invariant properties of $Z_{+}$ and $Z_{-}$.\\
When $G=c^{*}$, the two W foci have become nodes and the plane portraits for $G=c$, $c^* \leq c < c^{**}$ are given in Fig \ref{FigureBifurcationGSteps2-2}. When $c = c^{*} = c^{**}$ the two pairs of equilibrium points undergo a saddle-node bifurcation \cite[p.177]{hirsch2012differential}.\\
For  $c > c^{*}$ there are no equilibrium points in the interior of the phase space and so trajectories flow from $\mathrm{INF}^{-}$ to $\mathrm{INF}^{+}$. There is no other possibility. \\
In the next result we provide a cosmological interpretation for well-behaved models.\\
\begin{figure}[h!]
	\caption{$G=0$, this is the union of the phase portraits Fig \ref{FigureDrawingVacuumBoundary1} ( $Y_{4}=0$ ) and \cite[Figure 2]{hewitt1988qualitativeI} ( S = 1). }
	\centering
	\includegraphics[width=0.85\textwidth,height=0.2\textheight]{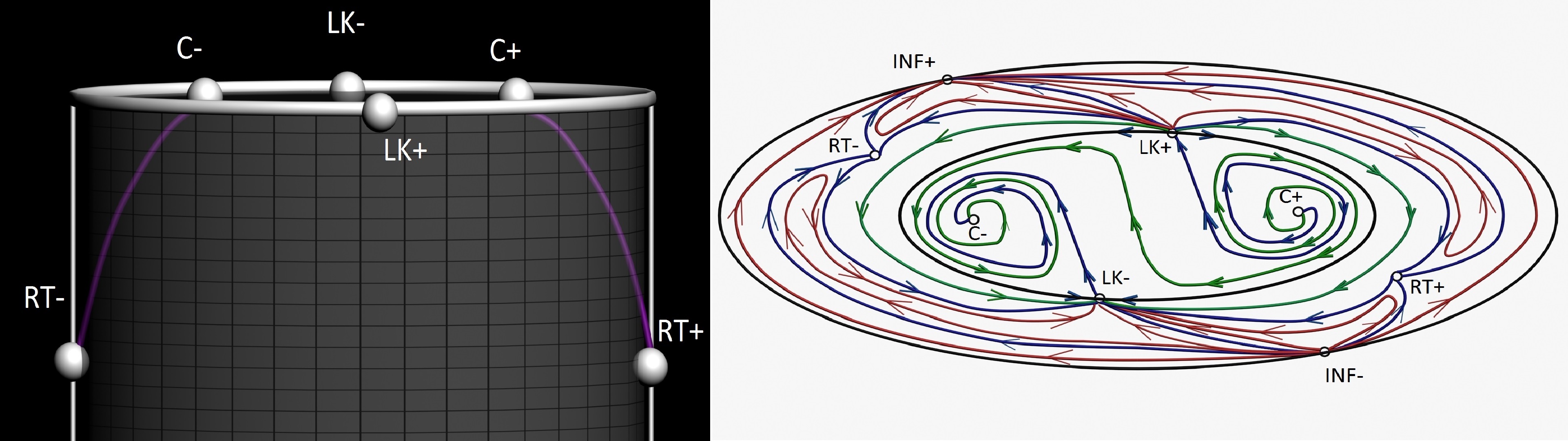}
	\label{FigureBifurcationGSteps1}
\end{figure}
\begin{figure}[h!]
	\caption{$ 0 < G < c^{**}$, the surface intersects with four Wainwright equilibrium points.}
	\centering
	\includegraphics[width=0.85\textwidth,height=0.2\textheight]{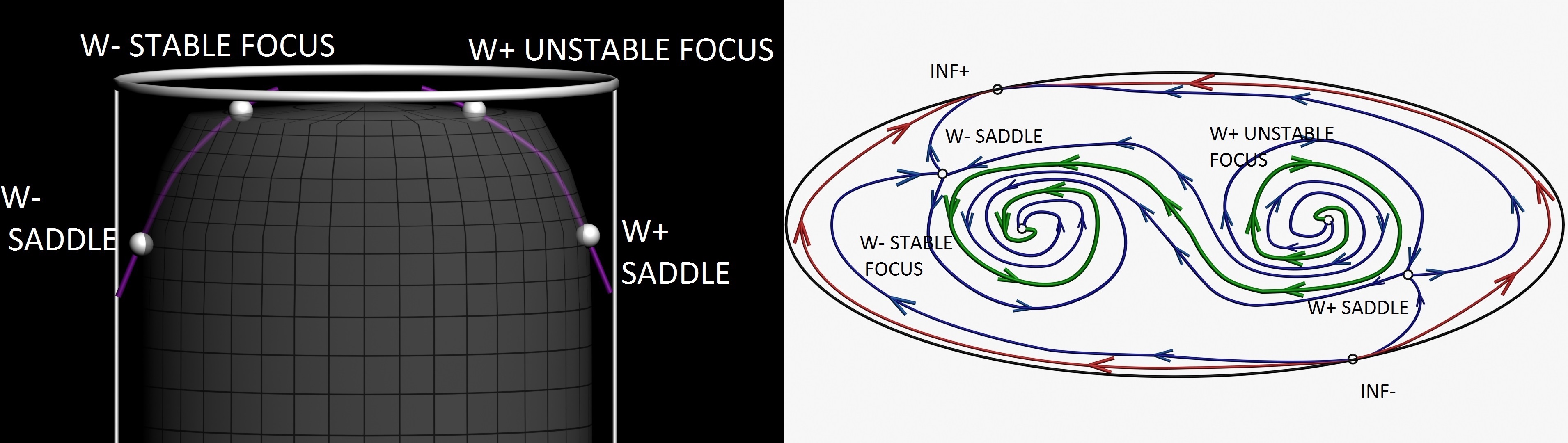}
	\label{FigureBifurcationGSteps2}
\end{figure}
\begin{figure}[h!]
	\caption{$c^* \leq c < c^{**}$, the surface intersects with four Wainwright equilibrium points.}
	\centering
	\includegraphics[width=0.85\textwidth,height=0.2\textheight]{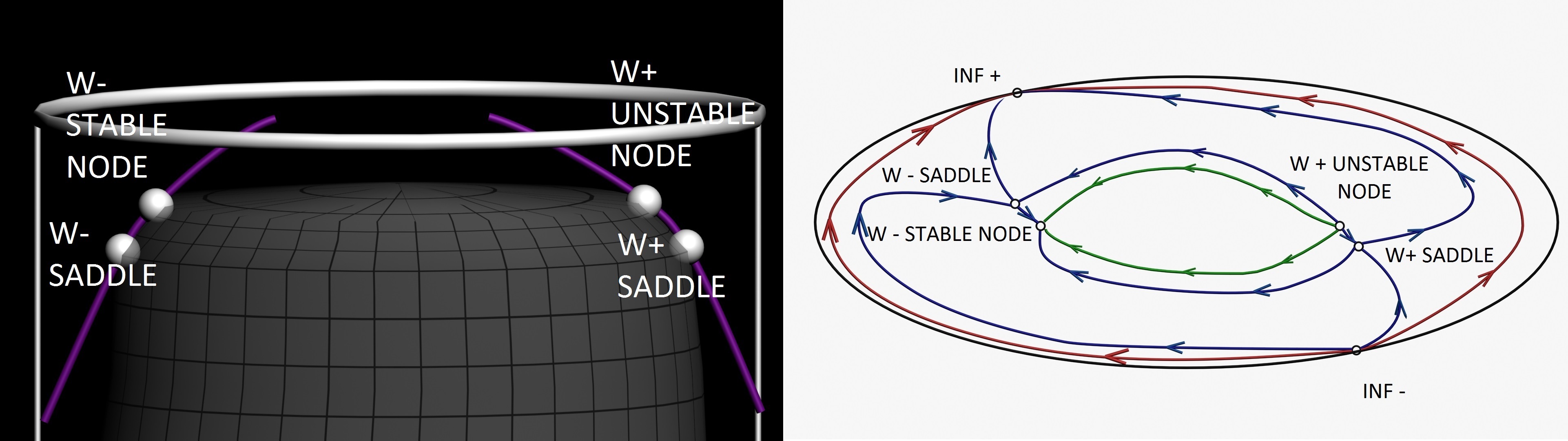}
	\label{FigureBifurcationGSteps2-2}
\end{figure}
\begin{figure}[h!]
	\caption{The invariant 2-space $G=c^*$.}
	\centering
	\includegraphics[width=0.85\textwidth,height=0.2\textheight]{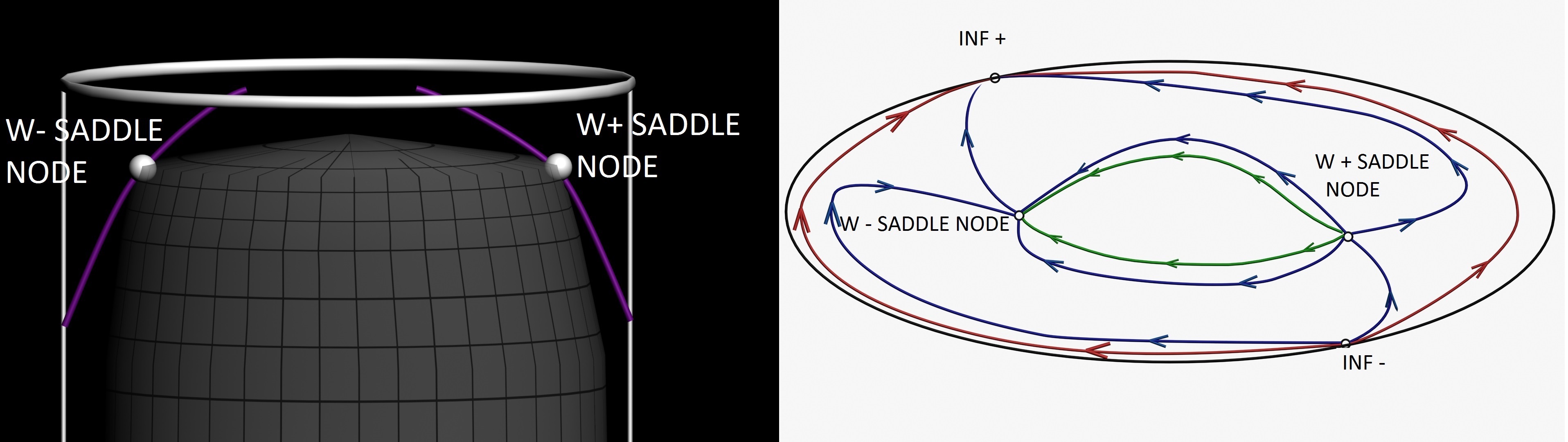}
	\label{FigureBifurcationGSteps3}
\end{figure}
\begin{figure}[h!]
	\caption{The invariant 2-space $G > c^{**}$.}
	\centering
	\includegraphics[width=0.85\textwidth,height=0.18\textheight]{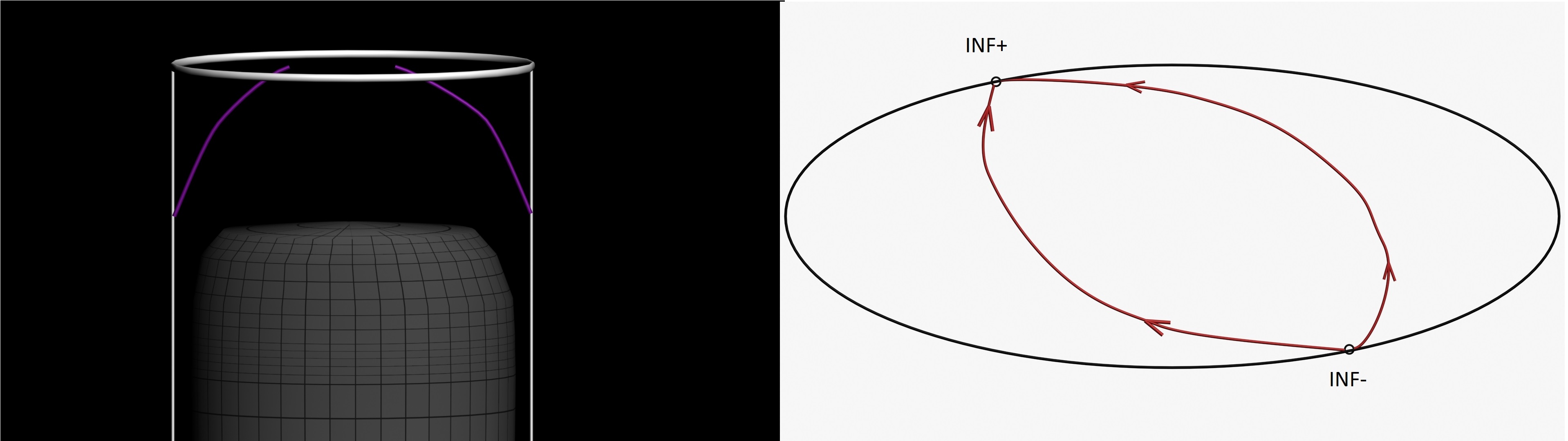}
	\label{FigureBifurcationGSteps4}
\end{figure}
\begin{figure}[h!]
	\caption{The invariant 2-space $S=0$.}
	\centering
	\includegraphics[width=0.85\textwidth,,height=0.18\textheight]{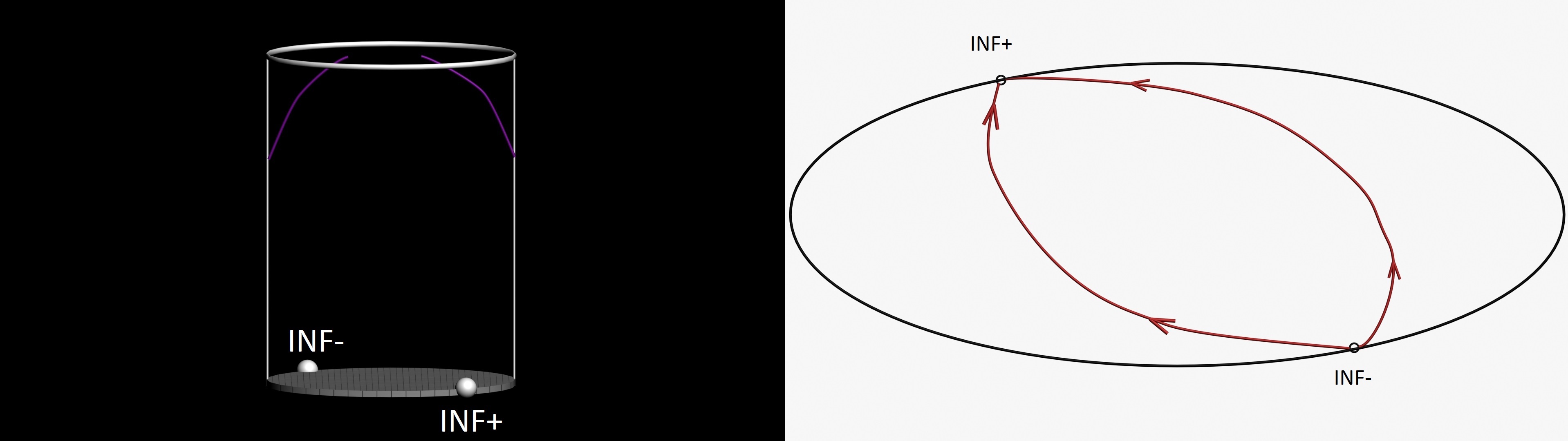}
	\label{FigureBifurcationGSteps5}
\end{figure}
\newpage
\begin{corollary}
	When the equation of state parameter is equal to $\frac{10}{9}$, there exists an open set of well-behaved cosmological models in this class. The models are matter dominated, in that, $\lim\limits_{\chi \to \pm \infty} \Omega = \Omega_{e} \neq 0$, where $\Omega_{e}$ is the value of $\Omega$ at the equilibrium point. In addition, the models are asymptotically spatially homogeneous, in that $\lim\limits_{ \chi \to \pm \infty} \dot{U} = 0$, $\lim\limits_{\chi \to \pm \infty} \dot{U}^{a}_{\ ; a} = 0$.
\end{corollary}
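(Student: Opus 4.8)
The plan is to follow the proof of Corollary~\ref{CorollaryExistenceOfCosmodels} almost verbatim, with the isolated Collins point C replaced by the one-parameter family of Wainwright equilibria W, and with the existence of the required trajectories already supplied by Lemma~\ref{LemmaExistenceGamma19}. The genuinely new content is threefold: upgrading a single W-to-W trajectory to an open set, checking boundedness of the physical variables, and reading the two asymptotic characterizations off the coordinates of W in Table~\ref{TableofEquilibriumPointsGeneral}. First I would produce the open set. On each invariant surface $G=c$ with $0<c<c^{*}$ the surface meets the two W segments in four points, of which two are saddles and two are foci or nodes; by the reflection symmetry $(\chi,Y_{1},Y_{2},S)\mapsto(-\chi,-Y_{1},-Y_{2},S)$ one of the foci/nodes is a source and the other a sink. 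The sink attracts an open two-dimensional region of the surface and, in backward time, the source repels one, so the W-to-W trajectories of Lemma~\ref{LemmaExistenceGamma19} fill a two-dimensional open subset of each such surface. Letting $c$ range over an interval in $(0,c^{*})$ then sweeps out a three-dimensional open set of trajectories asymptotic to an interior W point as $\chi\to\pm\infty$.

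Second, boundedness proceeds exactly as in Corollary~\ref{CorollaryExistenceOfCosmodels}: $Y_{1},Y_{2},S$ are bounded on the compact phase space, $Y_{4}$ is quadratic in $Y_{1},Y_{2}$ and hence bounded, so $\Omega=Y_{4}S^{2}$ is bounded. Since $\Omega=\theta^{(2-\gamma)/(\gamma-1)}=\theta^{8}$ at $\gamma=\frac{10}{9}$ and $\Omega\to\Omega_{e}\neq0$, the expansion $\theta$ is bounded and bounded away from zero, so every physical variable is bounded and the models are well-behaved.

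Third, the asymptotic state is extracted from Table~\ref{TableofEquilibriumPointsGeneral}. At an interior W point $Y_{4}=\frac{7}{3}-\frac{32}{3}Y_{1}^{2}>0$ and $S^{2}=\frac{7}{3}-8Y_{1}^{2}>0$, so $\lim_{\chi\to\pm\infty}\Omega=\Omega_{e}=Y_{4,e}S_{e}^{2}\neq0$, which is matter domination. For the acceleration, a short computation shows that at $\gamma=\frac{10}{9}$ the variable $\dot{U}$ is proportional to $L/S$, and the defining relation $Y_{2}=-\frac{20}{9}Y_{1}$ of W is exactly $L=0$; since $S$ stays bounded away from zero, $\dot{U}\to0$. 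I would then invoke the analogue, valid in the full three-dimensional system, of the divergence formula quoted in the footnote to Corollary~\ref{CorollaryExistenceOfCosmodels}, namely that $\dot{U}^{a}_{\ ;a}/\theta^{2}$ is proportional to $\Omega-\Omega_{e}$ and hence vanishes at W; with $\Omega\to\Omega_{e}$ and $\theta$ bounded this yields $\dot{U}^{a}_{\ ;a}\to0$, completing the claim of asymptotic spatial homogeneity.

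I expect this last step to be the main obstacle. Establishing $\dot{U}^{a}_{\ ;a}\to0$ requires the divergence-of-acceleration formula away from the invariant set $S=1$, where the footnote formula was derived, and this is not immediate from the compactified equations (\ref{ODE-Y1Y2D2-1})--(\ref{ODE-Y1Y2D2-3}). It is most cleanly obtained through the local asymptotic analysis deferred to the appendix, by expanding the dimensionful quantities near a W equilibrium and showing that the spatial gradient driving $\dot{U}^{a}_{\ ;a}$ is controlled by $\Omega-\Omega_{e}$. Everything else is bookkeeping built on Lemma~\ref{LemmaExistenceGamma19} and the stability data already tabulated.
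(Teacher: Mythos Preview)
Your proposal is correct and follows essentially the same approach as the paper, which simply states that ``the proof follows from the proof of Lemma~\ref{LemmaExistenceGamma19}'' and leaves all the verification implicit. You have filled in precisely the details the paper omits---the open-set argument via varying $c$, the boundedness argument paralleling Corollary~\ref{CorollaryExistenceOfCosmodels}, and the reading of $\Omega_e\neq 0$ and $\dot U\to 0$ from the W coordinates---and your flagged concern about $\dot U^{a}_{\ ;a}$ is a point the paper does not address explicitly either.
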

\begin{proof}
	The proof follows from the proof of Lemma \ref{LemmaExistenceGamma19}.
\end{proof}
\begin{figure}[H]
	\caption{A typical trajectory for $ 0 < c < G_{c}$.}
	\centering
	\includegraphics[width=0.75\textwidth,height=0.3\textheight]{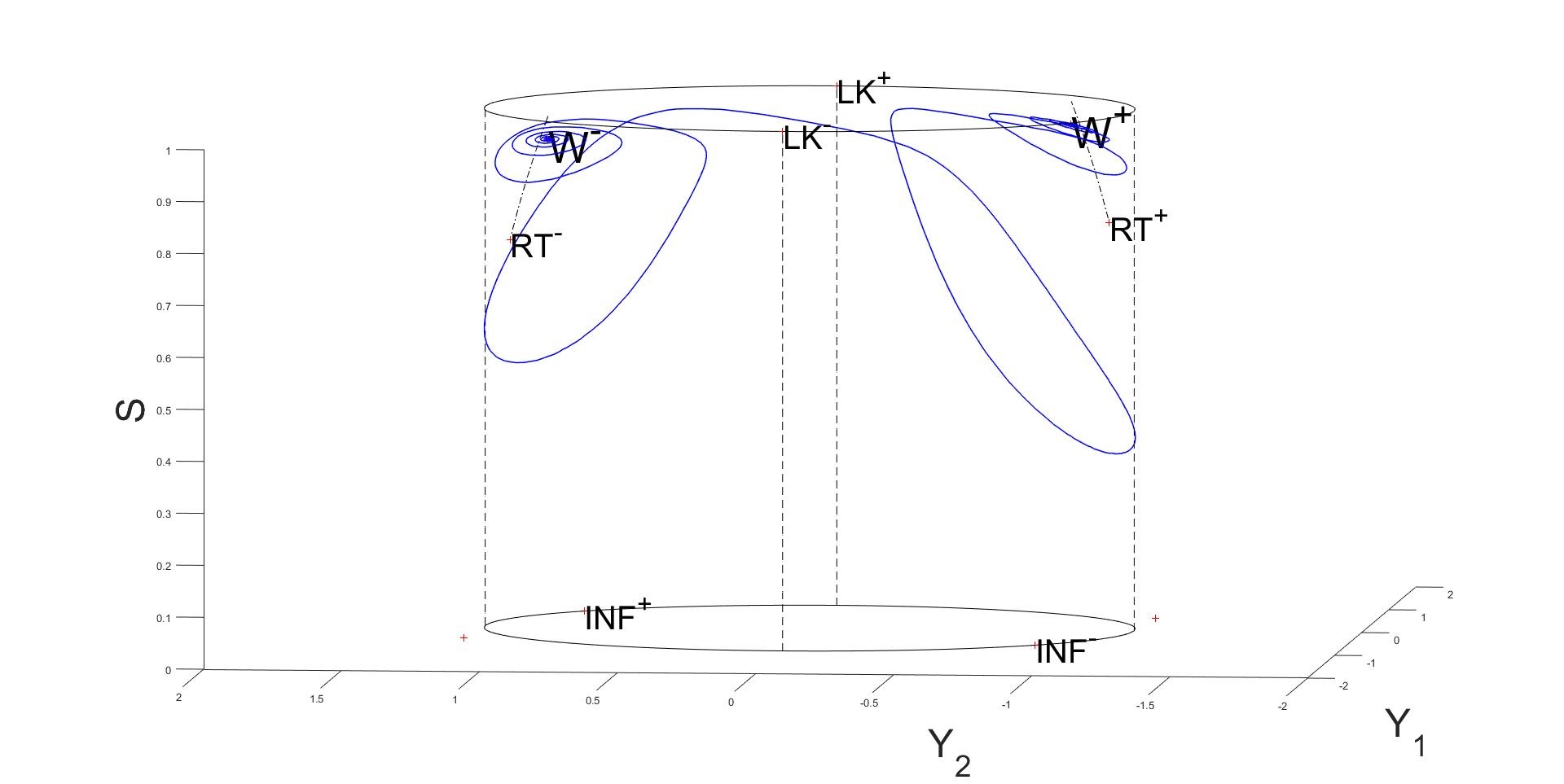}
	\label{FigureBifurcation3D109thPhaseportraits}
\end{figure}
\begin{figure}[H]
	\caption{ $\Omega$ vs. $\chi$,  for the solution in Fig \ref{FigureBifurcation3D109thPhaseportraits}}
	\centering
	\includegraphics[width=0.75\textwidth]{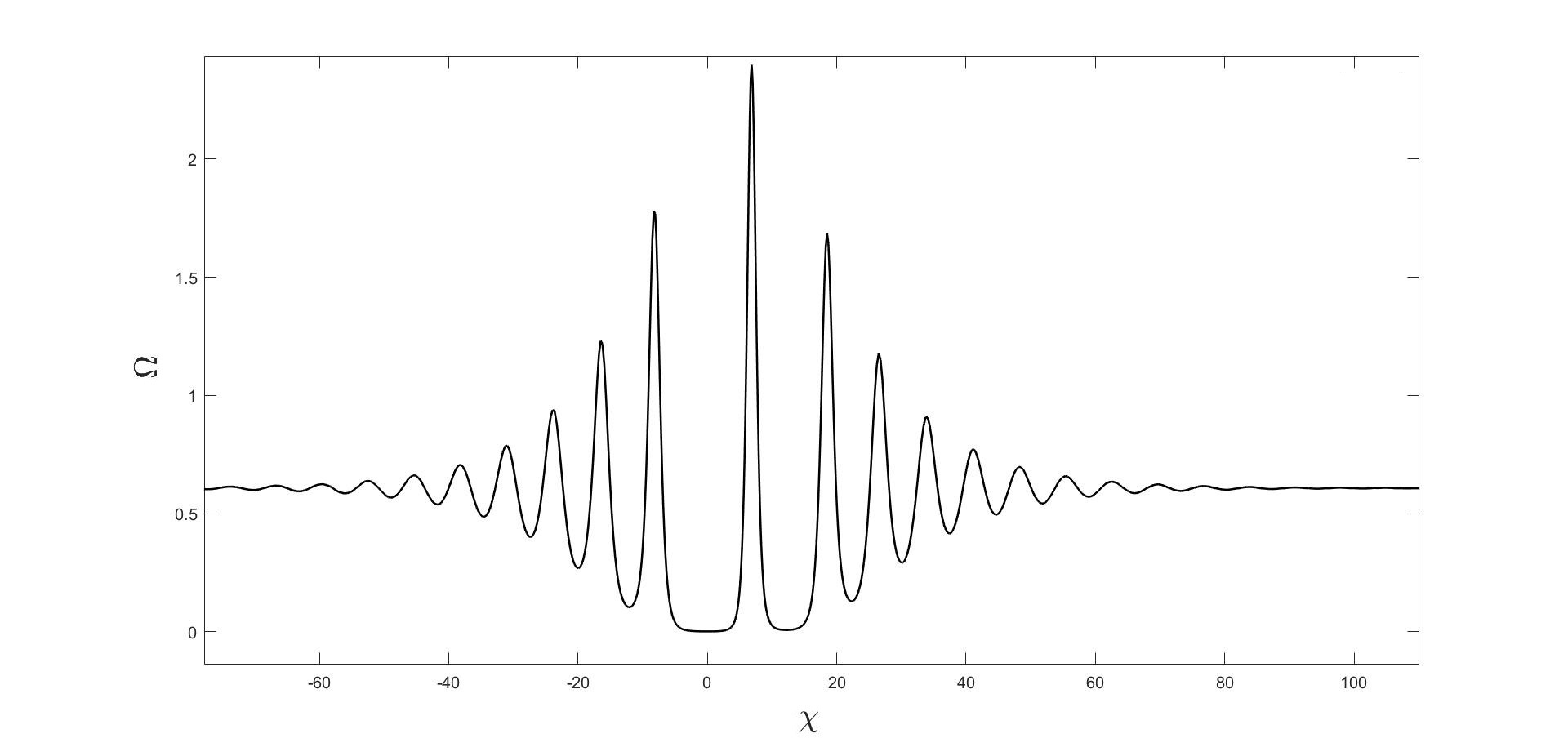}
	\label{FigureBifurcationG4OMEGA2}
\end{figure}
A numerical simulation of a trajectory which is asymptotic the W equilibrium points is given in Fig \ref{FigureBifurcation3D109thPhaseportraits} and the density profile of the corresponding cosmological model is given in Fig \ref{FigureBifurcationG4OMEGA2}.
\phantom{A}\\

\section*{Summary And Conclusion:}
In this paper we have examined a large class of spatially inhomogeneous self-similar cosmological models. They form an exceptional class, in a similar manner to the exceptional Bianchi B class ($VI_{h=-1/9}$) in that one of the KVFs is H.O.\\
We have shown that there exists an open set of well-behaved inhomogeneous models that posses a big-bang singularity at $t=0$. This open set is
\begin{enumerate}[label=(\alph*)]
	\item matter dominated and asymptotically spatially homogeneous for $1 < \gamma < \frac{10}{9}$, $\frac{4}{3} < \gamma < \frac{10}{7}$, and $\frac{10}{7} < \gamma < \frac{3}{2}$.
	\item vacuum dominated and asymptotically spatially homogeneous for $1 < \gamma < \frac{3}{2}$.
	\item vacuum dominated and acceleration dominated for $1 < \gamma < \frac{3}{2}$.
\end{enumerate}
In each of these cases, the well-behaved models can be thought of as inhomogeneous generalizations of the associated spatially homogeneous equilibrium points. In all cases, except $\gamma = \frac{10}{9}$, the inhomogeneity adds an additional off-diagonal shear component. When $\gamma=\frac{10}{9}$, the off-diagonal shear component is already present and we are introducing inhomogeneity in the energy density and in this off-diagonal shear variable. Each one of the well-behaved cosmological models has a big-bang singularity at $t=0$, and has all of its physical and geometrical variables, energy density, acceleration, spatial curvature are bounded on any slice, $t =$constant $>0$.

\appendix 
\section{Asymptotic Analysis Near The Equilibrium Point $\mathrm{INF}^{\pm}$} 
In this appendix we show that cosmological models corresponding to trajectories which are asymptotic to the equilibrium points $\mathrm{INF}^{+}$ are badly-behaved in almost all cases. To be more precise, for the perfect fluid models, the acceleration of the fluid congruence is unbounded as $\chi \longrightarrow +\infty$. For the vacuum models, the components of the Weyl tensor diverge as $\chi \longrightarrow +\infty$ when $\gamma \neq \frac{4}{3}$. In the case of $\gamma = \frac{4}{3}$ the Weyl tensor components tend to zero and the asymptote is actually flat spacetime. The behaviour near $\mathrm{INF}^{-}$ is identical, and can be proven with the appropriate sign changes.\\

The equilibrium point $\mathrm{INF}^{+}$ has coordinates $Y_{1} = Y_{1e}$, $Y_{2} = 2(4-3\gamma) Y_{1e}$, $S =0$ with $Y_{1e} = \sqrt{ \frac{ 4(3-2\gamma)(\gamma-1)  }{(2-\gamma)^2}  }$. The eigenpairs of the linearization matrix of the DE (\ref{ODE-Y1Y2D2-1})-(\ref{ODE-Y1Y2D2-3}) at $\mathrm{INF}^{+}$ are
\begin{eqnarray}
\Bigl( 2 \alpha , \begin{bmatrix} \delta \\
		1  \\
		0 \\
	\end{bmatrix} \Bigl), 
\Bigl( 3 \alpha , \begin{bmatrix} 0 \\
	0 \\
	1 \\
\end{bmatrix} \Bigl),
\Bigl( \frac{(5 \gamma - 4)}{(\gamma - 1)} \alpha , \begin{bmatrix} 1 \\
	4-\gamma^2 - 2\gamma \\
	0 \\
\end{bmatrix} \Bigl),
\end{eqnarray}
where $\alpha = -4(\gamma-1)(2-\gamma)Y_{1e}$, $\delta = - \frac{(4-3\gamma)}{8(3-2\gamma)(\gamma-1)}$. The first two eigenpairs are the eigenpairs for the linearization of the DE restricted to the vacuum boundary. Close to $\mathrm{INF}^{+}$ the variables have the following behvaiour:
\begin{eqnarray}
	Y_{1} \approx Y_{1e}, \quad Y_{2} \approx 2(4 - 3\gamma) Y_{1e}, \quad S \approx e^{3 \alpha \chi}, \quad Y_{4} \approx e^{ (\frac{5 \gamma - 4}{\gamma - 1}) \alpha \chi}.  \label{Y1eAtEq}
\end{eqnarray}
Furthermore the variable $S \dot{U} \approx \frac{1}{2}(2-\gamma) Y_{1e}$ near $\mathrm{INF}^{+}$.\\
In order to obtain information about the physical variables we need to obtain the expansion scalar $\theta$. We have, from \cite[(A.43) and (A.53)]{hewitt2020},
\begin{eqnarray}
	\partial_{1} \theta = -r \theta = - 3 \dot{U} \theta.
\end{eqnarray}
And so, near $\mathrm{INF}^{+}$, we have
\begin{eqnarray}
	\frac{3}{8(\gamma-1)} \frac{1}{S} \frac{ \partial \theta}{\partial \chi} = - 3 \dot{U} \theta,
\end{eqnarray}
so that
\begin{eqnarray}
	\frac{\partial \theta}{\partial \chi} = -8(\gamma - 1) S \dot{U} \theta \approx \alpha \theta.
\end{eqnarray}
It follows that the spatial dependence of $\theta$ near $\mathrm{INF}^{+}$ is $\theta(\chi) \approx e^{\alpha \chi}$.\\
We can obtain this result another way for the perfect fluid models. Since, near $\mathrm{INF}^{+}$,
\begin{eqnarray}
	Y_{4} \approx e^{ \frac{(5 \gamma - 4)}{(\gamma - 1)} \alpha \chi   }, \mbox{and} \ S \approx e^{3\alpha \chi},
\end{eqnarray}
then
\begin{eqnarray}
	\Omega(\chi) =  Y_{4} S^{-1}\, \underline{\approx}\, e^{  \frac{(2-\gamma)}{(\gamma-1)} \alpha \chi  },
\end{eqnarray}
and so
\begin{eqnarray}
	\theta(\chi) = \Omega(\chi)^{ \frac{\gamma-1}{2-\gamma}  } \approx e^{\alpha \chi}. \label{thetaofX}
\end{eqnarray}
It now follows that the acceleration of the perfect fluid congruence, $\dot{u}_{1}$, is unbounded for all these models corresponding to trajectories approaching $\mathrm{INF}^{+}$:
\begin{eqnarray}
	\dot{u}_{1} = \theta \dot{U} \approx e^{\alpha \chi} e^{-3 \alpha  \chi} = e^{-2 \alpha \chi}, \label{udot1X}
\end{eqnarray}
which diverges as $\chi \longrightarrow \infty$.\\
We can determine the asymptotic behaviour of the physical variables for a cosmological model, corresponding to a trajectory approaching $\INF^{+}$, as follows. The asymptotic behaviour of the variables $Y_{1}$, $Y_{2}$, and $Y_{4}$ is given in (\ref{Y1eAtEq}), and (\ref{S-D}) and (\ref{CoordinateTransTUV-to-Y1Y2Y3Y4}) may be used to determine the asymptotic behaviour of the dimensionless variables. The asymptotic behaviour of the other physical variables is now determined using (\ref{thetaofX}) for $\theta$ and \cite[(26), (27)]{hewitt2020}. We have
\begin{eqnarray}
a_{1} \approx e^{-2\alpha \chi}, n_{13} \approx e^{-2\alpha \chi}, \sigma_{13} \approx e^{-2\alpha \chi}, \label{variable-2ax}
\end{eqnarray}
with
\begin{eqnarray}
	\lim_{\chi \to \infty } (n_{23} + a_{1}) =0, \label{Alimits1}\\
	\lim_{\chi \to \infty } (\dot{u}_{1} - 2 a_{1}) =0, \label{Alimits2}  \\
	\lim_{\chi \to \infty } (\sigma^2_{13} - 4a^2_{1} )  =0, \label{Alimits3}  \\
	\lim_{\chi \to \infty } \theta = 0. \label{Alimits4}
\end{eqnarray}
In order to make some statements about the vacuum models, we consider the non-zero Electric and Magnetic parts of the Weyl tensor \cite[p.19]{wainwright2005dynamicalSystems}, which are:
\begin{eqnarray}
\fl E_{11} = \frac{1}{2}(\gamma-1)(7\gamma - 10)\theta^2 + 2 a_{1} \dot{u}_{1} - \sigma^2_{13} + \frac{1}{3}(3 \gamma  -2) \mu, \label{E11-component}\\
\fl E_{22} = -(\gamma - 1)(3\gamma-4) \theta^2 - \dot{u}_{1}(a_{1}+ n_{23}) - \frac{1}{6}(3 \gamma - 2) \mu, \label{E22-component}\\
\fl E_{33} = - \frac{1}{2}(\gamma-1) (\gamma-2) \theta^2 - \dot{u}_{1}(a_{1}-n_{23}) + \sigma^{2}_{13} - \frac{1}{6}(3 \gamma -2) \mu, \label{E33-component}\\
\fl E_{13} = \theta \sigma_{13}  \frac{(4-3\gamma)}{2}, \label{E13-component}\\
\fl H_{12} = - (a_{1} + n_{23}) \sigma_{13}, \label{H12-component}\\
\fl H_{23} = \frac{\theta}{4}\Bigl( (5\gamma-6)(\dot{u}_{1} + a_{1}) + (3 \gamma-2) n_{23}  \Bigl).\label{H23-component}
\end{eqnarray}
If follows from (\ref{E13-component}), (\ref{variable-2ax}), and (\ref{thetaofX}) that if $\gamma \neq \frac{4}{3}$ then $E_{13} \approx e^{-\alpha \chi}$. Thus, $E_{13}$ and some of the invariants of the Weyl tensor diverge as $\chi \longrightarrow \infty$, and the models have a singularity.

The value of $\gamma$ being $\frac{4}{3}$ is important for the vacuum boundary for a number of reasons. First of all, the plane wave equilibrium points, LK, are actually the Bianchi type III form of flat spacetime ( see \cite{hsu1986self} ) for this value of the parameter. In addition, there is a pair of straight line solutions of the ODE (\ref{ODE-Y1Y2D2-1})-(\ref{ODE-Y1Y2D2-3}), given by $Y_{1} = 1$ and $Y_{1}=-1$. The corresponding cosmological model is just a slicing of flat spacetime using an accelerating timelike congruence. Thirdly,
when $\gamma = \frac{4}{3}$, $E_{13} = 0$ and, due to the limits indicated in (\ref{Alimits1})-(\ref{Alimits4}), the Weyl tensor components may tend to zero for vacuum models corresponding to trajectories tending to $\INF^{+}$ as $\chi \longrightarrow \infty$. It follows that for this value of $\gamma$, almost all the vacuum models are asymptotic to flat spacetime both as $\chi \longrightarrow \infty$ and as $\chi \longrightarrow -\infty$. The only exceptions being the RT model, and the vacuum models which are asymptotic to RT as either $\chi \longrightarrow \infty$ or as $\chi \longrightarrow -\infty$.
\ack
This work was partially supported by Natural Sciences And Engineering Council of Canada (NSERC) (B.C.) and partially by University of Waterloo's Department of Physics and Astronomy via Graduate Research Scholarships, Marie Curie Awards, and Science Graduate Awards (S.R.) and the C.E.M.C. (C.H.).

\newpage

\section*{References}
\bibliographystyle{iopart-num}
\bibliography{references2}

\end{document}